\documentclass[english,conference]{IEEEtran}
\usepackage[T1]{fontenc}
\usepackage[latin9]{inputenc}
\usepackage{cite}
\usepackage{amsmath}
\usepackage{amsthm}
\usepackage{amssymb}
\usepackage{mathtools}
\usepackage[normalem]{ulem}
\usepackage{bm}
\usepackage{xcolor}
\usepackage{tikz}
\usetikzlibrary{positioning,calc,arrows.meta,fit,backgrounds,decorations.pathreplacing}

\newif\ifjournalx

\journalxtrue

\newcommand{\ifjournal}{\ifjournalx} %

\newif\ifarxiv
\arxivtrue

\allowdisplaybreaks

\makeatletter
\theoremstyle{plain}
\newtheorem{thm}{\protect\theoremname}
\theoremstyle{plain}
\newtheorem{lem}[thm]{\protect\lemmaname}
\ifx\proof\undefined
\newenvironment{proof}[1][\protect\proofname]{\par
	\normalfont\topsep6\p@\@plus6\p@\relax
	\trivlist
	\itemindent\parindent
	\item[\hskip\labelsep\scshape #1]\ignorespaces
}{%
	\endtrivlist\@endpefalse
}
\providecommand{\proofname}{Proof}
\fi
\theoremstyle{remark}
\newtheorem{rem}[thm]{\protect\remarkname}
\theoremstyle{definition}
\newtheorem{defn}[thm]{Definition}

\theoremstyle{plain}

\makeatother

\usepackage{babel}
\providecommand{\corollaryname}{Corollary}
\providecommand{\lemmaname}{Lemma}
\providecommand{\remarkname}{Remark}
\providecommand{\theoremname}{Theorem}

\renewcommand{\mid}{\,|\,}
\newcommand{\F}{\mathbb{F}}

\newcommand{\cB}{\mathcal{B}}
\newcommand{\cC}{\mathcal{C}}
\newcommand{\cD}{\mathcal{D}}
\newcommand{\cE}{\mathcal{E}}

\newcommand{\cG}{\mathcal{G}}

\newcommand{\cI}{\mathcal{I}}

\newcommand{\cS}{\mathcal{S}}

\newcommand{\cV}{\mathcal{V}}

\newcommand{\cX}{\mathcal{X}}
\newcommand{\cY}{\mathcal{Y}}
\newcommand{\cZ}{\mathcal{Z}}

\newcommand{\bS}{\bm{S}}

\newcommand{\bX}{\bm{X}}
\newcommand{\bY}{\bm{Y}}

\newcommand{\ba}{\bm{a}}

\newcommand{\bc}{\bm{c}}

\newcommand{\bi}{\bm{i}}

\newcommand{\bs}{\bm{s}}

\newcommand{\bu}{\bm{u}}
\newcommand{\bv}{\bm{v}}
\newcommand{\bw}{\bm{w}}
\newcommand{\bx}{\bm{x}}
\newcommand{\by}{\bm{y}}
\newcommand{\bz}{\bm{z}}

\newif\ifold
\oldfalse

\renewcommand{\triangleq}{\coloneqq}

\DeclareMathOperator{\Aut}{PAut}   %
\DeclareMathOperator{\SAut}{SAut}   %
\newcommand{\sym}[1]{\mathbb{S}_{#1}}

\newcommand{\E}{\ensuremath{\mathbb{E}}}
\renewcommand{\Pr}{\ensuremath{\mathbb{P}}}

\newcommand{\RM}{\ensuremath{\mathrm{RM}}}

\definecolor{mygreen}{RGB}{ 0, 100, 0}

\newif\ifdraft
\draftfalse

\ifdraft
\newcommand{\JF}[1]{}
\newcommand{\NK}[1]{}
\newcommand{\HP}[1]{}
\newcommand{\GR}[1]{}
\newcommand{\GPT}[1]{}
\newcommand{\TODO}[1]{}
\else
\newcommand{\JF}[1]{\textcolor{cyan!70!black}{JF: #1}}
\newcommand{\NK}[1]{\textcolor{orange!80!black}{NK: #1}}
\newcommand{\HP}[1]{\textcolor{mygreen}{HP: #1}}
\newcommand{\GR}[1]{\textcolor{purple}{GR: #1}}
\newcommand{\GPT}[1]{\textcolor{blue}{GPT: #1}}
\newcommand{\TODO}[1]{\textcolor{red!70!black}{TODO: #1}}
\fi

\begin{document}
\title{Reed--Muller Codes Achieve the Symmetric Capacity on Finite-State Channels}
\author{
\IEEEauthorblockN{Henry D. Pfister\IEEEauthorrefmark{1},
Navin Kashyap\IEEEauthorrefmark{2},
Jean-Francois Chamberland\IEEEauthorrefmark{3},
Galen Reeves\IEEEauthorrefmark{1}}

\IEEEauthorblockA{\IEEEauthorrefmark{1}Duke University \quad
\IEEEauthorrefmark{2}IISc Bangalore \quad
\IEEEauthorrefmark{3}Texas A\&M University}
}
\maketitle

\begin{abstract}
We study reliable communication over finite-state channels (FSCs) using Reed--Muller (RM) codes. Building on recent symmetry-based analyses for memoryless channels, we show that a sequence of binary RM codes (with some random scrambling) can achieve the symmetric capacity (or uniform-input information rate) of a binary-input indecomposable FSC.

\ifjournal
Our approach has three components. First, we establish a capacity-via-symmetry theorem for doubly-transitive group codes on discrete memoryless channels (DMCs) with non-binary inputs, under some symmetry and puncturing conditions. Then, we reduce a binary-input FSC to an almost memoryless non-binary channel by grouping adjacent input bits into blocks and interleaving non-binary codes onto the channel.
Finally, we show that the interleaved non-binary codes can be constructed from a single binary RM code.
\fi

\end{abstract}

\begin{IEEEkeywords}
Channel capacity, Capacity via symmetry, Finite-state channels, Reed--Muller codes.
\end{IEEEkeywords}

\section{Introduction}

The past decade has seen significant progress in understanding how highly-structured error-correcting codes can achieve information-theoretic limits.
In particular, much attention has been devoted to binary Reed--Muller (RM) codes~\cite{Muller-ire54,Reed-ire54}, with a goal of proving that they achieve capacity of binary memoryless channels.
This program culminated in the demonstration that RM codes are capacity-achieving on binary memoryless symmetric (BMS) channels, first for bit-error probability and then for block-error probability~\cite{Abbe-it15,Kudekar-stoc16,Kudekar-it17,Abbe-it20,hkazla-stoc21,Reeves-it23,Abbe-focs23}.

Similar symmetry-based techniques were extended to generalized RM (GRM) codes and non-binary channels in~\cite{Reeves-isit23}.
\ifjournal
A key insight of the non-binary analysis is that the proof applies broadly to code families that exhibit strong permutation and symbol symmetries (e.g., doubly-transitive group actions), together with a puncturing property which ensures that a carefully chosen set of coordinates can be removed with negligible rate change.
\fi

This work extends symmetry-based techniques to finite-state channels (FSCs), i.e., channels with memory described by a finite-state Markovian evolution~\cite{Blackwell-annmathstats58,Blackwell-annmathstats59,Gallager-1968}.
To give more context, a limited history of coding for FSCs is provided in Section~\ref{sec:history}.
For FSCs, we target the \emph{symmetric capacity} which corresponds to the information rate for i.i.d.\ uniform inputs (Definition~\ref{def:sym_capacity_fsc}).
Our main result shows that sequences of binary RM codes, augmented by a random block-scrambling and interleaving construction, can achieve the symmetric capacity on binary-input indecomposable FSCs (IFSCs)
under maximum a posteriori (MAP) decoding.

The primary challenge we face
is that the evolving state couples channel uses and breaks the independence assumptions on which existing RM proofs rely.
Classical results for indecomposable FSCs imply that, %
by interleaving many independent codes of the same rate and decoding each of them under a memoryless assumption, one can achieve the symmetric capacity of the stationary averaged channel; however, this rate is usually less than the FSC symmetric capacity.
By applying decision-feedback equalization (DFE)
techniques, one can recoup the loss by optimizing the rates of the interleaved codes~\cite{Mushkin-it89,Goldsmith-it96,Pfister-globe01,Soriaga-it07,Narayanan-aller04}.
Our approach lifts the FSC into a form that is both approximately memoryless and highly symmetric, so that the existing RM machinery can be applied at rates approaching the FSC symmetric capacity.
A key difficulty lies at the boundaries between RM symbols, since recent symbols are informative about the current channel state; naively discarding this information can degrade performance.
Controlling boundary effects is therefore central to the construction. \\[0.5mm]
We now explain how this is accomplished.

\vspace{0.5mm}
\noindent\emph{(i) From FSCs to interleaved non-binary channels.}
One approach to attenuate the memory of an FSC %
is to interleave multiple subcodes, inserting large gaps between bits %
belonging to the codeword so that interleaves behave nearly independently.
However, because nearby bits carry information about the channel state, ignoring previous bits leads to some loss in performance.
Fortunately, this issue can be mitigated by grouping adjacent inputs into blocks of bits and treating each block as a non-binary symbol, thereby minimizing the impact of the boundary effects.

Adopting this strategy, we treat the sequence of input blocks as $d$ interleaved non-binary group codes.
To understand the performance, we can distinguish a single interleave and focus on it as a representative of all others.
In this setup, we refer to blocks in the distinguished interleave as \emph{protected}, and these blocks are separated by $d-1$ blocks in the input stream.
For FSCs with exponential mixing of the state process, this separation produces approximate independence of the boundary states across protected blocks, which in turn yields an approximately memoryless block channel.

To ensure strong symmetry of the induced block channel (and uniform-input behavior even when the state evolution depends on the input), we apply independent random scrambling within each protected block and reveal the scrambling to the decoder.
The resulting induced channel is compatible with the memoryless capacity-via-symmetry theorem.

\vspace{1mm}
\emph{(ii) Capacity-via-symmetry for non-binary codes.}
We distill from~\cite{Reeves-isit23} a general theorem (Theorem~\ref{thm:capacity_symmetry_subcode_app}) for symmetric memoryless channels with a non-binary input alphabet $\cX$ that forms a group.
This result states that a sequence of doubly-transitive group codes over $\cX$ achieves the symmetric capacity of the channel, with respect to symbol-error probability under symbol-MAP decoding, provided that: (a) the code satisfies suitable symmetry conditions; (b) a mild non-degeneracy condition on the channel holds; and (c) a \emph{two-look puncturing} condition is satisfied.

\vspace{1mm}
\emph{(iii) Binary RM codes are sufficient.}
To use binary RM codes for this problem, we first use Theorem~\ref{thm:capacity_symmetry_subcode_app} to prove Theorem~\ref{thm:capacity_symmetry_subcode} in the memoryless setting.
Then, we rely on the ability of binary RM codes to support \emph{implicit interleaving and puncturing} with negligible rate loss.
In particular, if a single binary RM codeword is transmitted over an FSC, then the set of bits in the protected symbols of a single interleave also form a codeword, albeit from a shorter RM code of slightly higher rate.
We use the idea of implicit interleaving to bound the performance of a single long code by treating it as multiple interleaved shorter codes.
Moreover, for long enough codes, one can simultaneously make the rate difference arbitrarily small and the decimation factor $d$ arbitrarily large.
The interleaving property follows from the large automorphism group of RM codes, while the puncturing property follows from their nested construction.
\ifarxiv
We note that these properties have been exploited previously in proofs that RM codes achieve capacity~\cite{Kudekar-it17,Reeves-it23,Abbe-focs23}.
\fi

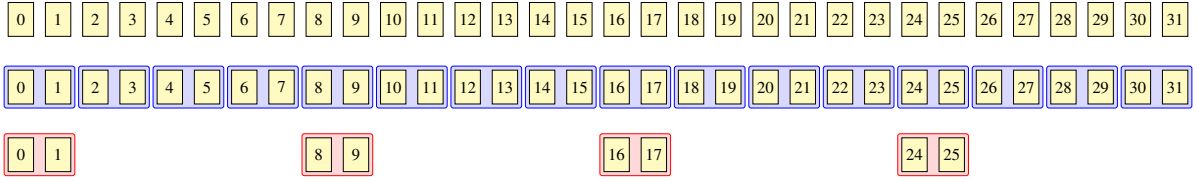
\begin{figure*}[t]
\centering
\scalebox{0.9}{%
\begin{tikzpicture}[
  font=\small,
  bit/.style={fill=yellow!30!white, draw, minimum width=3.75mm, minimum height=5mm, inner sep=0pt, align=center},
  blk/.style={draw, rounded corners=1pt, inner sep=1.5pt}
]

\foreach \i in {0,...,31} {
  \node[bit] (s\i) at ({-4+0.55*\i},2) {\scriptsize \i};
}

\foreach \i in {0,...,31} {
  \node[bit] (t\i) at ({-4+0.55*\i},1) {\scriptsize \i};
}

\foreach \i in {0,1,8,9,16,17,24,25} {
  \node[bit] (r\i) at ({-4+0.55*\i},0) {\scriptsize \i};
}

\begin{scope}[on background layer]
  \foreach \b in {0,4,...,12} {
    \pgfmathtruncatemacro{\a}{2*\b}
    \pgfmathtruncatemacro{\c}{2*\b+1}
    \node[blk, fit=(r\a)(r\c), red, fill=red!15] {};
  }
\end{scope}

\begin{scope}[on background layer]
  \foreach \b in {0,4,...,12} {
    \pgfmathtruncatemacro{\a}{2*\b}
    \pgfmathtruncatemacro{\c}{2*\b+1}
    \node[blk, fit=(t\a)(t\c), blue, fill=blue!15] {};
  }
\end{scope}

\begin{scope}[on background layer]
  \foreach \b in {1,2,3,5,6,7,9,10,11,13,14,15} {
    \pgfmathtruncatemacro{\a}{2*\b}
    \pgfmathtruncatemacro{\c}{2*\b+1}
    \node[blk, blue, fill=blue!15, fit=(t\a)(t\c)] {};
  }
\end{scope}
\end{tikzpicture}}
\caption{Bits to blocks to interleaves for $N=32$ bits grouped into blocks of $b=2^h$ bits (e.g., with $h=1$) and $d=2^g$ interleaves (e.g., with $g=2$).
The last row shows a single representative interleave of 4 protected blocks after decimation by $d$.}
\label{fig:block_decimate_structure}
\end{figure*}

\begin{thm}[Informal] %
Consider a binary-input indecomposable FSC that is injective with symmetric capacity $C$ bits per channel use.
For any rate $R<C$, let $\cC_m = \RM(r_m,m)$ be a sequence of binary RM codes with rates $R_m\to R$ and let $h\colon \mathbb{N} \to \mathbb{N}$ satisfy $h(m)\to \infty$ and $h(m)/m\to 0$.
Then, MAP decoding of $\cC_m$ (with random scrambling applied to blocks of $2^{h(m)}$ bits) has vanishing symbol error probability as $m\to\infty$.
\end{thm}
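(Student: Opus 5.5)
The plan follows the three-step outline (i)--(iii) from the introduction, with a vanishing error budget split among the steps. Fix $R<C$ and $\delta>0$ with $R+3\delta<C$. Write the blocklength as $N=2^m$, the block size as $b=2^{h}$ with $h=h(m)$, and the number of interleaves as $d=2^{g}$, where $g=g(m)\to\infty$ slowly enough that $h+g=o(m)$.

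\emph{Step 1: implicit interleaving.} The bits of the blocks in one interleave are indexed by the coordinates $z\in\F_2^m$ whose middle $g$ coordinates are fixed. This is the decimation shown in Fig.~\ref{fig:block_decimate_structure}. Restricting $\RM(r_m,m)$ to such an affine subspace gives $\RM(r_m,m-g)$. Because $g=o(\sqrt{m})$ can be chosen and $r_m$ sits near $m/2$ at constant rate, the rate of this restriction exceeds $R_m$ by only $o(1)$. So each interleave carries a codeword of a slightly higher-rate RM code. Grouping its bits into blocks of $b$ bits gives a group code over $\cX=\F_2^{b}$. Its automorphism group contains the affine maps of $\F_2^{m-g-h}$ acting on the block index, and hence it is doubly transitive on blocks. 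MAP decoding of the whole codeword can only do better than decoding a single interleave while the other interleaves are treated as unknown. So it suffices to bound the symbol error of one interleave, and a union bound over the $d$ interleaves handles the rest if needed, since the bound will be uniform over interleaves.

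\emph{Step 2: reduction to an almost memoryless channel.} Protected blocks are separated by $(d-1)b$ channel uses. Indecomposability gives exponential forgetting of the initial state, and I may need to extract a uniform mixing rate from injectivity/indecomposability here. As a result, the joint law of the outputs of the protected blocks, given their inputs, is within total variation $\epsilon_m \le N e^{-c(d-1)b}\to 0$ of a product of identical block channels $W_b$. Each factor $W_b$ is the FSC run for $b$ steps from a stationary-like state that is itself induced by i.i.d.\ uniform inputs on the unprotected blocks. The random scrambling revealed to the decoder makes $W_b$ symmetric with respect to the group $\cX$, so its capacity equals its uniform-input mutual information. The boundary loss, namely the gap between $I(W_b)/b$ and the FSC symmetric capacity $C$, is $O(\log|\text{states}|/b)$ and tends to $0$ as $h\to\infty$. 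Hence $I(W_b)/b\ge C-\delta$ for large $m$.

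\emph{Step 3: apply Theorem~\ref{thm:capacity_symmetry_subcode_app}.} I verify its three conditions for the block code over $W_b$. Double transitivity comes from Step 1. Non-degeneracy comes from injectivity of the FSC. Two-look puncturing follows from the nesting $\RM(r-1,\cdot)\subset\RM(r,\cdot)$: puncturing a few affine subcubes changes the rate negligibly. The theorem then gives vanishing symbol-MAP error on the product channel at rate $R+\delta<I(W_b)/b$. The product-channel error and the true error differ by at most $\epsilon_m$, so the theorem's conclusion transfers to the real FSC.

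The main obstacle is that Theorem~\ref{thm:capacity_symmetry_subcode_app} is stated for a fixed channel, whereas here the alphabet size $2^{b}$ and the channel $W_b$ both change with $m$. I would first try to reprove it in a quantitative form, in which the sharp-threshold width depends on the alphabet only through $\log|\cX|=b$. That width must be $o(1)$ when $b=2^{h(m)}$ with $h=o(m)$; the doubly transitive group has size $\approx 2^{2(m-g-h)}$, which is exponentially larger than $2^{b}$ only if $b$ grows slowly. If this fails, I would fall back to choosing $h$ growing very slowly along a subsequence and using monotonicity of RM performance in $h$.
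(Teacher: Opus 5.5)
Your Steps 1--2 and the total-variation transfer match the paper's argument (Steps 2--4 and 7 of the proof of Theorem~\ref{thm:main}). Step 3, however, has a genuine gap. Theorem~\ref{thm:capacity_symmetry_subcode_app} has four hypotheses, and you check only (A1), (A3) and (A4). The omitted one, (A2), requires the \emph{code} to have a symbol-automorphism group containing a doubly-transitive subgroup matched to the channel. The revealed scrambling symmetrizes the channel, but it does nothing to the code.

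The blocked RM code of an interleave fails (A2). Block translations $\bx\mapsto\bx+\bm{\beta}$ are symbol automorphisms, since the word $\beta(\bu)$ has degree at most $h\le r_m$. However, any symbol automorphism $\sigma$ with $\sigma(\bm{0})=\bm{0}$ must also fix $\bm{1}$. To see this, take a codeword $p(\bw)$ of degree exactly $r_m$. It is constant on each block, so every block is $\bm{0}$ or $\bm{1}$. If $\sigma(\bm{1})=\bm{\gamma}$ with $\gamma(\bu)$ non-constant, the image is the evaluation of $\gamma(\bu)p(\bw)$, whose degree exceeds $r_m$. So the stabilizer of $\bm{0}$ is not transitive on the nonzero symbols, and the theorem cannot be invoked on the blocked code. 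The paper states this obstruction explicitly.

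The missing idea is to pass through a supercode that does have the symbol symmetry. The paper uses the interleaved RM code $\cI$ of Definition~\ref{def:interleaved_rm_supercode}: $b$ independent rows from $\RM(r_m,m-g-h)$, with columns taken as symbols in $\F_2^b$.
\begin{itemize}
\item Each row $\bw\mapsto f(\bu,\bz,\bw)$ of a blocked codeword lies in $\RM(r_m,m-g-h)$, so the blocked code is a subcode of $\cI$ (Lemma~\ref{lem:blocked_rm_subcode}).
\item A symbolwise map $\bx\mapsto A\bx+\bm{\beta}$ sends the rows of a codeword of $\cI$ to linear combinations of rows plus multiples of $\bm{1}$. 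Hence $\SAut(\cI)\supseteq\mathrm{AGL}(b)$, which is matched to the scrambled channel.
\item (A1) and (A4) are inherited rowwise from the RM code.
\end{itemize}
One applies the theorem to $\cI$ and then returns to the blocked subcode via Lemma~\ref{lem:subcode_ser_monotone}: reveal the coset of the subcode inside $\cI$ as side information and use the translation symmetry of the scrambled channel. The price is the rate gap $R(\RM(r_m,m-g-h))-R(\RM(r_m,m-g))\le h/(2\sqrt{m-g-h})$ from Lemma~\ref{lem:rm_rate_diff}. So this route needs $h=o(\sqrt m)$, not just the $h+g=o(m)$ your outline uses.

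Two smaller points. First, (A4) asks that projecting onto a \emph{vanishing} fraction of coordinates (containing $0$, with a transitive automorphism group) changes the rate negligibly. This comes from restriction to an affine subspace of codimension $o(\sqrt m)$, not from deleting a few subcubes or from nesting in $r$. Second, your worry about uniformity in $b$ is legitimate. The paper gives no quantitative version either; it simply lets $h(m)$ grow sufficiently slowly, which is essentially your fallback.
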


\noindent
This result is restated in more detail as Theorem~\ref{thm:main}.

\section{Brief History of Capacity and Coding for FSCs}
\label{sec:history}

FSCs provide a broad abstraction for channels with memory evolving on a finite state space.
They were introduced by Blackwell, Breiman, and Thomasian, who established a capacity theorem for \emph{indecomposable} FSCs, providing the first operational meaning of FSC capacity and identifying
regularity conditions under which information stability holds~\cite{Blackwell-annmathstats58,Blackwell-annmathstats59}.
Gallager's 1968 textbook extended these ideas and distilled them into a unified treatment of channels with memory~\cite{Gallager-1968}.

Starting in the late 1980s and accelerating in the 2000s, there has been sustained interest in the problems of efficiently estimating their capacity and designing coding schemes to approach it.
In 1989, Mushkin and Bar-David~\cite{Mushkin-it89} made progress on both these problems for the Gilbert--Elliott burst-noise channel~\cite{Gilbert-bell60,Elliott-bell63} which is the canonical example of a fading-type FSC.
In particular, they discussed interleaving as a technique to control correlation and decision-feedback as a tactic to recoup the rate lost by neglecting the correlation.
These ideas were later extended to more practical models for fading channels~\cite{Wang-vt95,Goldsmith-it96,Sadeghi-sp08}.

Coding for FSCs was greatly improved by the breakthrough introduction of turbo codes, turbo equalization, and iterative information processing~\cite{Berrou-icc93,Douillard-ett95}.
Many groups then turned their attention to coding for FSCs that model intersymbol interference where the state is a deterministic function of past inputs~\cite{Fan-aller99,Souvignier-com00,Kavcic-it03}.
For these channels, simulation-based estimators of the mutual information were proposed and found to be surprisingly effective~\cite{Arnold-icc01,Pfister-globe01,Arnold-it06,Pfister-it07}.
These ideas were also paired with interleaved optimized codes to approach capacity~\cite{Pfister-globe01,Pfister-it07}.
Efficient optimization of the input distribution has also been studied~\cite{Kavcic-globe01,Soriaga-isit04,Vontobel-it08,Han-it15,Wu-it22}.

A common practical coding technique for FSCs is to use a single LDPC code with turbo equalization.
One drawback of this approach is the associated increase in decoding complexity.
This issue can be mitigated by using interleaved codes together with a decision-feedback scheme~\cite{Narayanan-aller04}.
Another drawback is that, in order to approach the symmetric capacity, the LDPC code(s) must be optimized for the particular FSC.
This limitation can be overcome by using universal codes such as spatially coupled LDPC codes~\cite{Kudekar-isit11-DEC,Nguyen-icc12,Abe-isit16}.
Polar codes with successive-cancellation trellis decoding can also approach the symmetric capacity and, in some cases, even the true capacity~\cite{Wang-itw15,Sasoglu-isit16,Sasoglu-it19}.

These prior contributions provide many paths toward capacity but do not identify explicit structured code families that achieve it.
In this work, we rely on the implicit interleaving property of RM codes to show that binary RM codes can universally achieve the symmetric capacity of primitive FSCs. We note here that coding schemes involving RM codes have been previously proposed~\cite{rameshwar-it23} for input-constrained BMS channels (which are a special case of FSCs), but the coding schemes in that work are not capacity-achieving.

\subsection{Organization}

Section~\ref{sec:background} presents necessary background, including a review of FSCs and Reed--Muller codes. Section~\ref{sec:general_capacity_thm} contains a statement of the (existing) non-binary capacity-via-symmetry theorem for memoryless channels.
Section~\ref{sec:rm_fsc_decimation} develops the blocking and decimation ideas used to approach the symmetric capacity on IFSCs. 
Section~\ref{sec:conclusion} offers some conclusions.

\section{Background} \label{sec:background}

\subsection{Preliminaries}
Let $\cX$ be a finite set with $q$ elements.
Information rates of channels with input alphabet $\cX$ and codes over $\cX$ are measured in \emph{qits} (i.e., logarithm base-$q$ units).
All other logarithms are natural.
\ifarxiv
We use bold lowercase (e.g., $\bx$) for vectors and calligraphic letters (e.g., $\cX$) for sets.
We represent the binary field by $\F_2$ and, more generally, $\F_q$ denotes a finite field of size $q$ (a prime power).
For a positive integer $N$, we write
\[
[N]\triangleq\{0,1,\ldots,N-1\}.
\]
For a finite set $\cX$, we write $\sym{\cX}$ for the symmetric group of all permutations of $\cX$ and, in particular, $\sym{N}$ denotes the permutations of $[N]$.
The expectation of random variable $Z$ is denoted by $\E[Z]$, and the probability of event $E$ is $\Pr(E)$.
The total-variation distance between distributions $P$ and $Q$ on a finite set $\cZ$ is
\[
\|P-Q\|_{\mathrm{TV}} \triangleq 
\frac{1}{2} \sum_{z \in \cZ} \left| P(z) - Q(z) \right|.
\]
For a vector $\bc=(c_0,\ldots,c_{N-1})\in \cX^N$ and an integer $d\ge 1$, we define \emph{decimation by $d$} as
\begin{equation} \label{eq:decimation}
\bc^{\downarrow d} \ \triangleq\ (c_0,c_d,c_{2d},\ldots,c_{\lfloor (N-1)/d\rfloor d})\in\cX^{\lfloor (N-1)/d\rfloor+1}.
\end{equation}

\else
For a vector $\bc=(c_0,\ldots,c_{N-1})\in \cX^N$ and an integer $d\ge 1$, we define its \emph{decimation by $d$} by
\begin{equation} \label{eq:decimation}
\bc^{\downarrow d} \ \triangleq\ (c_0,c_d,c_{2d},\ldots,c_{\lfloor (N-1)/d\rfloor d})\in\cX^{\lfloor (N-1)/d\rfloor+1}.
\end{equation}

Due to space constraints, we only discuss here notation that is not commonly used.
A general overview of our notation can be found in %
Appendix~\ref{sec:notation}.
\cite[App.~\ref{sec:notation}]{Pfister-isit26}.
For a vector $\bc=(c_0,\ldots,c_{N-1})\in \cX^N$ and an integer $d\ge 1$, we define its \emph{decimation by $d$} by
\begin{equation} \label{eq:decimation}
\bc^{\downarrow d} \ \triangleq\ (c_0,c_d,c_{2d},\ldots,c_{\lfloor (N-1)/d\rfloor d})\in\cX^{\lfloor (N-1)/d\rfloor+1}.
\end{equation}
\fi

\subsection{Finite-State Channels and Achievable Rates}
\label{sec:fsc}

Consider a \emph{finite-state channel} (FSC) with input alphabet $\cX$, output alphabet $\cY$, and state set $\cS$ (all finite)~\cite{Blackwell-annmathstats58,Gallager-1968}.
For a length-$N$ random input $\bX=(X_0,\ldots,X_{N-1}) \in \cX^N$, this channel outputs a random sequence $\bY = (Y_0,\ldots,Y_{N-1}) \in \cY^N$ that depends on the hidden random state sequence $\bS = (S_0,\ldots,S_N) \in \cS^{N+1}$.
The channel law is specified by the conditional probability\footnote{For notational convenience, we let $S_i$ be channel state that determines the channel from $X_i$ to $Y_i$ in contrast to~\cite{Gallager-1968} which uses $S_{i-1}$.}
\begin{align*}
P_{Y, S' \mid X,S}&(y, s' \mid x,s) \\
\ & =\
\Pr(S_{i+1}=s',\,Y_i=y\mid X_i=x,\,S_i=s),
\end{align*}
where $X_i\in\cX$ is the channel input at time $i$, $Y_i\in\cY$ is the corresponding output, $S_i\in\cS$ is the channel state for input $i$, and $S_{i+1}\in\cS$ is the subsequent state. 
Such a FSC is denoted by the tuple $(\cX,\cY,\cS,P)$.

For a known distribution on the initial state $S_0$, the channel induces a conditional joint distribution on output $\bY$ and state sequence $\bS$ given the input $\bX$.
Averaging over the state distribution gives the input-output relationship
\begin{align*}
\Pr(\bY&=\by \mid \bX=\bx)\\
&= \sum_{\bs \in\cS^{N+1}}
\Pr(S_0=s_0) \prod_{i=0}^{N-1} P_{Y, S'\mid X,S}(s_{i+1},y_i\mid x_i,s_{i}) .
\end{align*}
The input-controlled state-transition probabilities are
\[
P_{S'|X,S} (s' \mid x,s) \coloneqq \sum_{y\in \cY} P_{Y, S' \mid X,S}(y, s' \mid x,s)
\]
and the \emph{uniform state-transition probability matrix}
\[
\overline{\Phi}_{s,s'} \coloneqq \frac{1}{|\cX|} \sum_{x\in\cX} P_{S'\mid X,S}(s'\mid x,s).
\]

\begin{defn}
	An FSC is called \emph{primitive} if the uniform state-transition probability matrix $\overline{\Phi}$ is primitive.
\end{defn}

\begin{rem}
An FSC is primitive if and only if there exists an $L < \infty$ such that, for all $s,s'\in \cS$, there is an $\bx=(x_0,\ldots,x_{L-1}) \in \cX^L$ satisfying
\[ \Pr(S_{L}=s' \mid S_0= s,\,X_0 = x_0,\ldots, X_{L-1}=x_{L-1}) > 0. \]
In such a case, it follows that $\overline{\Phi}^L_{s,s'} > 0$ for all $s,s'\in \cS$.
\end{rem}

\begin{defn}
	An FSC is called \emph{indecomposable} if, for any $\epsilon >0$, there is an $L_0 < \infty$ such that, for all $L>L_0$, we have
\[
\begin{split}
\big| & \Pr(S_{L}=s_{L} \mid S_0=s_0, X_0=x_0, \ldots,X_{L-1}=x_{L-1}) \\
&- \Pr(S_{L}=s_{L} \mid S_0=s_0',\,X_0 = x_0,\ldots,X_{L-1}=x_{L-1}) \big| \!<\! \epsilon
\end{split}
\]
for all states $s_0,s_0',s_{L} \in\cS$ and all input vectors $\bx=(x_0,\ldots,x_{L-1}) \in \cX^L$.
This condition implies that the channel forgets its initial state regardless of the input sequence.
\end{defn}

\begin{defn}
\label{def:sym_capacity_fsc}
Let $\bX^n$ be an i.i.d.\ uniform input process (i.e., $X_i\sim \mathrm{Unif}(\cX)$) and let $\bY^n$ be the resulting FSC output process for initial state $S_0 = s_0$.
The \emph{symmetric capacity} (also called the symmetric information rate) is defined by
\begin{equation} \label{eq:SymmetricCapacity}
C_{\mathrm{sym}}
\ \triangleq\
\lim_{n\to\infty}\frac{1}{n} I(\bX^n;\bY^n \mid S_0=s_0), 
\end{equation}
when the limit exists for all $s_0 \in \cS$ and does not depend on $s_0$. 
It is measured in logarithm base-$|\cX|$ units per channel use.
\end{defn}

For indecomposable FSCs, the limit in \eqref{eq:SymmetricCapacity} exists and does not depend on the initial state $s_0 \in \cS$
\cite{Blackwell-annmathstats58,Gallager-1968}.

\begin{lem} \label{lem:ind2prim}
If an FSC is indecomposable, then the Markov chain defined by $\overline{\Phi}$ has a unique stationary distribution $\pi$ that is supported on the unique recurrent class.
Thus, the chain formed by removing all transient states is a primitive FSC with the same symmetric capacity as the original FSC.
\end{lem}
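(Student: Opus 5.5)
The plan has three parts: a structural fact about $\overline{\Phi}$ that follows from indecomposability, the existence of the stationary distribution, and then the claims about the reduced chain and its capacity.

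\textbf{Step 1: indecomposability forces one closed class.} Indecomposability is a statement about every input sequence, so it also holds for the input-averaged chain. Averaging over i.i.d.\ uniform inputs gives
\[
\Pr(S_L=s_L\mid S_0=s_0)=\frac{1}{|\cX|^L}\sum_{\bx}\Pr(S_L=s_L\mid S_0=s_0,\bx)=\overline{\Phi}^L_{s_0,s_L}.
\]
Hence $|\overline{\Phi}^L_{s,t}-\overline{\Phi}^L_{s',t}|<\epsilon$ for all $L>L_0$, which means the rows of $\overline{\Phi}^L$ become asymptotically identical. Suppose there were two disjoint closed communicating classes $A$ and $B$. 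Starting in $A$, the chain keeps mass $1$ on $A$ forever; starting in $B$, it keeps mass $0$ on $A$. Those rows would differ by $1$ on $A$, contradicting the above. So there is a unique recurrent class $R$.

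\textbf{Step 2: aperiodicity and the stationary distribution.} If $R$ had period $p>1$ with cyclic subclasses $R_0,\dots,R_{p-1}$, then two states in different subclasses would put all their mass on different subclasses at every time $L$. This again contradicts the merging of rows. Therefore $\overline{\Phi}$ restricted to $R$ is irreducible and aperiodic, hence primitive. Standard Markov chain theory then gives a unique stationary $\pi$ with $\pi(s)=0$ for every transient $s$.

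\textbf{Step 3: the reduced FSC.} Define the restricted FSC on state set $R$ with the same kernel $P$.
\begin{itemize}
\item It is well defined because $R$ is closed under every input. If $P_{S'|X,S}(s'\mid x,s)>0$ for some $x$, then $\overline{\Phi}_{s,s'}>0$. Closedness of $R$ under $\overline{\Phi}$ therefore forces $s'\in R$ for each individual input, not only on average.
\item It is primitive, because its uniform matrix is $\overline{\Phi}|_R$, which Step 2 showed is primitive.
\end{itemize}

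\textbf{Step 4: equal symmetric capacity.} For the original channel the limit $C_{\mathrm{sym}}$ exists and is independent of $s_0$ (the cited indecomposable FSC result), so it suffices to evaluate it at some $s_0\in R$. Starting from $s_0\in R$, the state never leaves $R$ under any inputs. The joint law of $(\bX^n,\bY^n)$ is therefore identical for the original and the restricted channel. So $\frac1n I(\bX^n;\bY^n\mid S_0=s_0)$ agrees for every $n$, and the limits coincide. The restricted channel is itself indecomposable, since the defining condition only involves states in $R$, so its limit is also independent of $s_0$.

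\textbf{Main obstacle.} The delicate step is the closedness argument in Step 3. The point is that a state set closed under the averaged matrix $\overline{\Phi}$ is automatically closed under each input-specific transition kernel. This holds because $\overline{\Phi}$ is a uniform average with strictly positive weight on every input symbol.
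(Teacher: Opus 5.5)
Your proof is correct. Its skeleton matches the paper's: rule out two disjoint closed classes, rule out periodicity via cyclic subclasses, then restrict to the closed class and use that $C_{\mathrm{sym}}$ does not depend on $s_0$. The engine in Steps 1--2 is different, however.

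The paper invokes Gallager's characterization of indecomposability (Theorem~4.6.3: there is an $L$ such that, for every input sequence, some state $s_L$ is reachable from every initial state). It combines this with the observation that $\overline{\Phi}_{s,s'}=0$ forces $P_{S'|X,S}(s'\mid x,s)=0$ for every $x$. Hence classes closed under $\overline{\Phi}$ are closed under each input-specific kernel, and the contradiction is that $s_L$ would lie in two disjoint closed sets, or in two different cyclic subclasses.

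You instead average the paper's $\epsilon$-definition itself over uniform inputs, using $\frac{1}{|\cX|^L}\sum_{\bx}\prod_i P_{x_i}=\overline{\Phi}^L$. This gives merging rows of $\overline{\Phi}^L$, and you derive both contradictions entirely inside the averaged chain. Since the definition is entrywise, you should say explicitly that you take $\epsilon<1/|\cS|$. Then a total-mass difference of $1$ on $A$ (or on a cyclic subclass) forces some single entry to differ by at least $1/|\cS|$.

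What each route buys:
\begin{itemize}
\item Your route is self-contained relative to the definition as stated in the paper, with no external equivalence theorem. It needs per-input closedness only once, in Step~3, for well-definedness of the restricted FSC.
\item Your row-merging statement is really weak ergodicity of $\overline{\Phi}$. Together with $\pi\overline{\Phi}^L=\pi$, it gives $\overline{\Phi}^L_{s,t}\to\pi(t)$, from which uniqueness, support on the recurrent class, and primitivity of the restriction all follow.
\item The paper's route trades this for a purely qualitative positivity argument that works directly with the input-dependent kernels.
\end{itemize}

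You also explicitly check that the restricted channel is itself indecomposable, so its symmetric capacity is well defined. The paper leaves that point implicit.
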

\ifarxiv
 \begin{proof}
   See Appendix~\ref{sec:proof_lem_ind2prim}.
 \end{proof}
\else
  \begin{proof}
    Omitted due to space limitations.  See~\cite[App.~\ref{sec:proof_lem_ind2prim}]{Pfister-isit26}
  \end{proof}
\fi

\begin{defn} A \emph{fading-type} FSC is a special class of FSC in which the state evolution is input-independent, i.e., there exists a state transition matrix $\Phi\in\mathbb{R}^{|\cS|\times|\cS|}$ such that
\[
P_{Y, S'\mid X,S}(y, s'\mid x,s)
\ =\
W_s(y\mid x)\,\Phi_{s,s'},
\]
where $W_s(\cdot\mid\cdot)$ is the input-output relation for state $s$.
\end{defn}
\begin{rem}
For a fading-type FSC, the state process $(S_i)_{i\ge 0}$ is a Markov chain with transition matrix $\Phi$. If the FSC is primitive, then $\Phi$ is primitive and the Markov chain has a unique stationary distribution $\pi$ and mixes exponentially fast.
\end{rem}

\begin{defn}
\label{def:CodeRate}
A \emph{code} of length $N$ over an alphabet $\cX$ is a subset $\cC\subseteq \cX^N$.
For such a code with $|\cX|=q$, we denote the rate by
\[
R(\cC)\ \triangleq\ \frac{1}{N}\log_{q}|\cC|,
\]
measured in qits per channel use.
If $\cX=\F_q$ and $\cC$ is an $\F_q$-linear subspace of $\F_q^N$ of dimension $K$, then $\cC$ is an $[N,K]$ linear code with $|\cC|=q^K$ and $R(\cC)=K/N$.
\end{defn}

\begin{defn}
The \emph{permutation automorphism group} $\cG$ of a code $\cC$ is defined by
\[ \cG \coloneqq \left\{ \pi \in \sym{N} \mid \forall c \in \cC, \pi c \in \cC \right\}. \] 
\end{defn}

\begin{defn}
Given a subset $J \subseteq [N]$, the \emph{punctured code} $\cC|_{J}\subseteq \cX^{|J|}$ is the projection of $\cC$ onto the coordinates in $J$.
For $J_d=\{0,d,2d,\dots\}$, note that $\bc \in \cC \; \Rightarrow \; \bc^{\downarrow d} \in \cC|_{J_d}$.
\end{defn}

\ifjournal
\begin{rem}[Achievable information rates]
Standard random-coding arguments for FSCs imply that, for any $R<C_{\mathrm{sym}}(W)$, there exist sequences of codes with asymptotic rate $R$ and vanishing block error probability.
Furthermore, when $|\cX| = q$ is a prime power, the same achievability holds for the ensemble of random linear coset codes over $\F_q$.
Indeed, the random-coding argument carries over because pairwise codeword differences are uniformly distributed (e.g., see \cite[Ch.~7]{Gallager-1968}).
\end{rem}
\fi

\subsection{Reed--Muller Codes}

In 1954, Reed--Muller (RM) codes were introduced by Muller~\cite{Muller-ire54} and a low-complexity decoder was described by Reed~\cite{Reed-ire54}. They are binary linear codes formed by evaluating binary multivariate polynomials of restricted degree over all possible evaluation points.
The following well-known properties of RM codes can be found in many papers~\cite{Abbe-now23,Reeves-it23}.

\begin{defn}
Let $\mathcal{F}(r,m)$ denote the set of $m$-variate multilinear functions over $\mathbb{F}_{2}$ with degree at most $r$. Thus, each $f\in\mathcal{F}(r,m)$ is a mapping $f\colon\mathbb{F}_{2}^{m}\to\mathbb{F}_{2}$ and we can write
\begin{align*}
\mathcal{F}(r,m)
&\triangleq \left\{ f(\bv) \;= \sum_{\bi\in\{0,1\}^{m}:|\bi|\leq r} \!\! a_{\bi}\,\bv^{\bi}\,\Bigg|\,a_{\bi}\in\mathbb{F}_{2}\right\} ,
\end{align*}
where $\bv^{\bi} = \prod_{j=0}^{m-1}v_{j}^{i_{j}}$ and $|\bi| \triangleq \sum_{j=0}^{m-1} i_{j}$.
For evaluations with $\bv \in \mathbb{F}_2^m$, the multilinear functions are equivalent to multivariate polynomials because $v^{i}=v$ for $v\in\mathbb{F}_{2}$ and $i\geq 1$.
\end{defn}

\begin{defn} \label{def:theta}
For $N=2^{m}$, let $\theta_{m}\colon[N]\to\mathbb{F}_{2}^{m}$ be the bijective map that takes integer $\ell \in [N]$ to its binary expansion $\ba = (a_0,\ldots,a_{m-1})\in\mathbb{F}_{2}^{m}$, where $a_{0}$ is the LSB and $a_{m-1}$ is the MSB.
For $m=5$ and $N=32$, the bijection yields $\theta_{5}(19)=(1,1,0,0,1)$.
Let $f$ be an $m$-variate polynomial over $\mathbb{F}_2$. Using this one-to-one correspondence, we take the evaluation of $f$ to be $\bc$ where $c_{\ell}=f\big(\theta_m (\ell)\big)$ for all $\ell \in [N]$.
\end{defn}

\begin{defn} \label{def:rm}
For $N\!=\!2^{m}$, let $\mathrm{RM}(r,m)\!\subseteq\!\mathbb{F}_{2}^{N}$ be the length-$N$ Reed--Muller code defined by evaluating all functions in $\mathcal{F}(r,m)$ on every point in $\mathbb{F}_{2}^{m}$. Equivalently, we can write
\[ \scalebox{0.95}{$\mathrm{RM}(r,m) \!\triangleq\!
\big\{ \bc\in\mathbb{F}_{2}^{N}\,\big|\,f\in\mathcal{F}(r,m),\,\! c_{\ell}=f\big(\theta_m (\ell)\big),\, \ell\in[N] \big\}.$} \]
\end{defn}

\begin{lem} \label{lem:rm_affine_sym}
The permutation-automorphism group of $\cC = \RM(r,m)$ is doubly transitive and generated by all invertible affine transformations of the codeword index space $\F_2^m$.
\end{lem}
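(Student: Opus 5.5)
The plan is to prove the two halves of the statement separately. First, every invertible affine map of $\F_2^m$ is a permutation automorphism of $\RM(r,m)$. Second, the affine group $\mathrm{AGL}(m,2)$ acting on $\F_2^m$ is doubly transitive. The indexing bijection $\theta_m$ of Definition~\ref{def:theta} identifies $[N]$ with $\F_2^m$. A permutation $\sigma$ of $\F_2^m$ then acts on $\bc$ by $(\sigma\bc)_\ell = c_{\theta_m^{-1}(\sigma^{-1}(\theta_m(\ell)))}$. If $\bc$ is the evaluation of $f$, then $\sigma\bc$ is the evaluation of $f\circ\sigma^{-1}$. So it suffices to show that $f\circ T$ has degree at most $r$ whenever $f\in\mathcal{F}(r,m)$ and $T(\bv)=A\bv+\bb$ with $A$ invertible.

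For invariance, each coordinate of $T(\bv)$ is an affine form $\sum_j A_{kj}v_j + b_k$, which has degree at most $1$. A monomial $\bv^{\bi}$ with $|\bi|\le r$ therefore becomes a product of at most $r$ affine forms. Expanding this product gives a polynomial of degree at most $r$. Reducing with $v_j^2=v_j$ never raises degree, so the multilinear representative also lies in $\mathcal{F}(r,m)$. By linearity, $f\circ T\in\mathcal{F}(r,m)$. Hence every element of $\mathrm{AGL}(m,2)$ is in the permutation automorphism group $\cG$.

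For double transitivity, take ordered pairs $(\ba,\ba')$ and $(\bz,\bz')$ with $\ba\ne\ba'$ and $\bz\ne\bz'$. We need an affine $T$ with $T(\ba)=\bz$ and $T(\ba')=\bz'$. The nonzero vectors $\ba'-\ba$ and $\bz'-\bz$ can each be extended to a basis of $\F_2^m$. This gives an invertible $A$ with $A(\ba'-\ba)=\bz'-\bz$. Setting $\bb=\bz-A\ba$ completes the construction, so $\mathrm{AGL}(m,2)\le\cG$ is doubly transitive, and hence so is $\cG$.

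The main obstacle is the claim that $\cG$ is \emph{equal to} (generated by) the affine group, i.e., that there are no further automorphisms. This is the classical theorem of MacWilliams--Sloane (see also Berger--Charpin): for $1\le r\le m-2$, $\mathrm{Aut}(\RM(r,m))=\mathrm{AGL}(m,2)$. Its proof goes through the minimum-weight codewords, which are exactly the indicator vectors of $(m-r)$-flats. Any automorphism must therefore permute the $(m-r)$-flats, and an intersection argument then shows it maps lines to lines. By the fundamental theorem of affine geometry it is then affine, with care needed for $m-r=2$ over $\F_2$. I would cite this result rather than reprove it.

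In the degenerate cases $r\in\{-1,0,m-1,m\}$, the code is $\{0\}$, the repetition code, the even-weight code, or all of $\F_2^N$, and the automorphism group is all of $\sym{N}$, which is not affine. I would therefore state those cases with the convention that the generated group is $\sym{N}$, or note that only the doubly-transitive containment is used later. That containment holds in every case, since $\sym{N}$ is also doubly transitive.
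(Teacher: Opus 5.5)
Your proposal is correct. The paper gives no proof of this lemma; it lists it among the ``well-known properties'' of RM codes and cites \cite{Abbe-now23,Reeves-it23}, so there is no competing argument to compare against. What you add is a self-contained verification of the two facts the paper actually uses downstream: that $\mathrm{AGL}(m,2)$ lies in the permutation-automorphism group, and that it acts doubly transitively. These are used in Step~2 of the proof of Theorem~\ref{thm:main} and in (A1) of the proof of Theorem~\ref{thm:capacity_symmetry_subcode}. You defer only the equality claim to MacWilliams--Sloane, which matches the paper's citation-based treatment.

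Your remark on the degenerate cases is a genuine caveat to the lemma as literally stated. For $r\in\{0,m-1,m\}$ the group is all of $\sym{N}$, which strictly contains $\mathrm{AGL}(m,2)$ once $m\ge 3$. For $m\le 2$ the two groups coincide, since $|\mathrm{AGL}(2,2)|=24=|\sym{4}|$, so your ``not affine'' is slightly too strong there. In general only the containment reading of the lemma is true.

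One correction to your sketch of the cited theorem, although it is not load-bearing since you cite rather than reprove it. Over $\F_2$ a line is just a $2$-point set, so ``maps lines to lines'' holds for every permutation, and the fundamental theorem of affine geometry does not apply.

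The classical argument uses $2$-flats instead. For $1\le r\le m-2$, every $2$-flat is an intersection of $(m-r)$-flats, so an automorphism maps $2$-flats to $2$-flats. After composing with a translation so that $\sigma(0)=0$, apply this to the $2$-flat $\{0,\bx,\by,\bx+\by\}$. Since the four points of any $2$-flat sum to zero, its image gives $\sigma(\bx+\by)=\sigma(\bx)+\sigma(\by)$, so $\sigma$ is linear.

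In particular, $m-r=2$ is the easiest case rather than a delicate one, because the minimum-weight codewords are then already the $2$-flats.
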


\begin{lem}
\label{lem:dim_rm}
$\mathrm{RM}(r,m)$ has dimension $\binom{m}{\leq r} \triangleq \sum_{i=0}^{r}\binom{m}{i}$.
\end{lem}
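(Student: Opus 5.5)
This is a standard result (Reed--Muller's classical dimension count). I plan to prove it by showing that the evaluation map $\mathrm{ev}\colon \mathcal{F}(r,m)\to\F_2^N$, $f\mapsto (f(\theta_m(\ell)))_{\ell\in[N]}$, is an injective $\F_2$-linear map. Then $\dim \RM(r,m)=\dim\mathcal{F}(r,m)$. The space $\mathcal{F}(r,m)$ is spanned by the monomials $\bv^{\bi}$ with $|\bi|\le r$. There are exactly $\sum_{i=0}^{r}\binom{m}{i}$ such $\bi\in\{0,1\}^m$, since choosing $\bi$ with $|\bi|=i$ amounts to choosing an $i$-subset of $[m]$. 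So it suffices to show that the images of these monomials are linearly independent in $\F_2^N$.

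Linearity of $\mathrm{ev}$ is immediate, since evaluation at a point is linear in the coefficients $a_{\bi}$. For injectivity, it is enough to treat the largest case $r=m$. In that case $\mathcal{F}(m,m)$ has $2^m=N$ monomials, and I will show that $\mathrm{ev}$ is a bijection onto $\F_2^N$; injectivity for smaller $r$ then follows by restriction.

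There are two natural routes for the case $r=m$:
\begin{itemize}
\item \textbf{Indicator polynomials.} Every function $g\colon\F_2^m\to\F_2$ is the evaluation of the multilinear polynomial
\[
\sum_{\ba\in\F_2^m} g(\ba)\prod_{j=0}^{m-1}(v_j+a_j+1),
\]
because the product equals $1$ exactly when $\bv=\ba$ and $0$ otherwise. Expanding the product gives a multilinear polynomial, so $\mathrm{ev}$ is surjective onto a space of dimension $2^m$. A surjective linear map between spaces of equal dimension $2^m$ is bijective.
\item \textbf{Induction on $m$.} Write $f=f_0(v_0,\dots,v_{m-2})+v_{m-1}f_1(v_0,\dots,v_{m-2})$. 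If $f$ vanishes on all of $\F_2^m$, then setting $v_{m-1}=0$ forces $f_0\equiv 0$ on $\F_2^{m-1}$. Setting $v_{m-1}=1$ then forces $f_1\equiv 0$ on $\F_2^{m-1}$. By induction both $f_0$ and $f_1$ are the zero polynomial, so $f$ is as well.
\end{itemize}

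None of these steps is a real obstacle. The only point requiring care is to work with multilinear polynomials rather than general polynomials, since in $\F_2$ we have $v^2=v$ on evaluations. This is why the definition of $\mathcal{F}(r,m)$ restricts to $\bi\in\{0,1\}^m$. With that restriction, the monomial count matches $2^m$ exactly and the dimension argument closes.
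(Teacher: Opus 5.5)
Your proof is correct and complete: both routes (the indicator-polynomial dimension count and the induction on $m$, whose trivial base case $m=0$ you should state) show that evaluation is injective on the $2^m$-dimensional space of multilinear polynomials. That injectivity restricts to $\mathcal{F}(r,m)$, so the $\sum_{i\le r}\binom{m}{i}$ monomial evaluations form a basis of $\RM(r,m)$.

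The paper gives no proof of its own; it lists this lemma among well-known properties of RM codes and defers to \cite{Abbe-now23,Reeves-it23}, where your argument is the classical one. The case $r\ge m$, which you leave implicit, is automatic because $\binom{m}{i}=0$ for $i>m$. This matches the paper's convention that $\RM(r,h)=\F_2^{2^h}$ when $r\ge h$.
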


\begin{lem}[$\!\!$\cite{Reeves-it23}] \label{lem:rm_rate_diff}
For $\cC_k = \RM(r,m+k)$, the rates satisfy $R(\cC_{k}) \geq R(\cC_0) - \frac{k}{2\sqrt{m}}$.
Thus, the rate decrease due to increasing $k$ is negligible if $k$ grows $o(\sqrt{m})$.
\end{lem}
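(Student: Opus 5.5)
The plan is to write the rate as a binomial tail probability and bound how much it changes each time the number of variables grows by one. By Lemma~\ref{lem:dim_rm}, with $B_n\sim\mathrm{Binomial}(n,1/2)$,
\[
R(\RM(r,n))=2^{-n}\binom{n}{\le r}=\Pr(B_n\le r).
\]
So it is enough to compare $\Pr(B_{m+k}\le r)$ with $\Pr(B_m\le r)$, which I will do one step in $n$ at a time and then telescope.

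\emph{One-step identity.} Write $B_{n+1}=B_n+\xi$ with $\xi\sim\mathrm{Bernoulli}(1/2)$ independent of $B_n$. Conditioning on $\xi$ gives
\[
\Pr(B_{n+1}\le r)=\tfrac12\Pr(B_n\le r)+\tfrac12\Pr(B_n\le r-1)=\Pr(B_n\le r)-\tfrac12\Pr(B_n=r).
\]
This is just Pascal's rule $\binom{n+1}{\le r}=\binom{n}{\le r}+\binom{n}{\le r-1}$ divided by $2^{n+1}$. Hence each increase of $n$ by one lowers the rate by exactly half of one binomial point mass.

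\emph{Point-mass bound.} Next I bound every point mass by the central one: $\Pr(B_n=r)\le\binom{n}{\lfloor n/2\rfloor}2^{-n}\le 1/\sqrt{n}$ for all $n\ge1$. For even $n=2j$, this follows from the standard estimate $\binom{2j}{j}\le 4^j/\sqrt{\pi j}$, which gives the sharper bound $\sqrt{2/(\pi n)}$. For odd $n=2j+1$, I would use
\[
\binom{2j+1}{j}2^{-(2j+1)}=\binom{2j}{j}4^{-j}\cdot\frac{2j+1}{2j+2}
\]
together with the even case, and check $j=0$ directly (the value is $1/2\le1$).

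\emph{Telescoping.} Summing the one-step identity from $n=m$ to $n=m+k-1$ gives
\[
R(\cC_k)=R(\cC_0)-\sum_{j=0}^{k-1}\tfrac12\Pr(B_{m+j}=r)\ \ge\ R(\cC_0)-\sum_{j=0}^{k-1}\frac{1}{2\sqrt{m+j}}\ \ge\ R(\cC_0)-\frac{k}{2\sqrt m}.
\]
The consequence stated in the lemma follows immediately: if $k=o(\sqrt m)$ then $k/(2\sqrt m)\to0$, so the rate loss is negligible.

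The only step that is not pure bookkeeping is the uniform bound on the central binomial coefficient, especially the odd case. It is classical (Stirling or Wallis-type bounds), so I do not expect a real obstacle.
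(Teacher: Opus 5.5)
Your proof is correct: Pascal's rule gives $R(\RM(r,n+1)) = R(\RM(r,n)) - \tfrac12\Pr(B_n = r)$, the central-binomial bound gives $\Pr(B_n = r)\le 1/\sqrt{n}$, and telescoping yields exactly $k/(2\sqrt{m})$ (your odd-case reduction also holds, since $(2j+1)^3\le \pi j(2j+2)^2$ for $j\ge 1$). The paper has no proof of its own here and simply cites \cite{Reeves-it23}, and your telescoping argument is the standard way to prove this estimate.
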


Sequences of $\RM$ codes are known to achieve the capacity $C$ of a binary memoryless symmetric (BMS) channel~\cite{Reeves-it23,Reeves-isit23,Abbe-focs23,Pfister-arxiv25a}.
For asymmetric binary memoryless channels, if one uses a random coset of each code, then standard arguments~\cite{Hou-it03} show that the modified sequence achieves the symmetric capacity $C_\mathrm{sym}$ described in Definition~\ref{def:sym_capacity_fsc}.

A key characteristic of $\RM$ codes is that, if one evaluates the message polynomials on an $n$-dimensional affine subspace of $\F_2^m$, then one obtains the code $\RM(r,n)$~\cite{Reeves-it23}.
Choosing the subspace whose first $g$ coordinates (i.e., LSBs) are fixed to $0$ defines a mapping from $\bc \in \RM(r,m)$ to $\bc' \in \RM(r,m-g)$ where $\bc'_{\ell} = \bc_{\ell 2^g}$, $\ell\in[2^{m-g}]$.
This implies that $\bc'$ is the result of decimating $\bc$ by a factor of $d=2^g$, an action denoted by $\bc' = \bc^{\downarrow d}$ (see~\eqref{eq:decimation}).
In addition, the resulting increase in code rate is negligible as $m\to \infty$ if $g=o(\sqrt{m})$~\cite{Reeves-it23}.

\section{Capacity via Symmetry}
\label{sec:general_capacity_thm}

This section outlines necessary background for the non-binary capacity-via-symmetry theorem on memoryless channels and then provides a formal statement.
In particular, the theorem describes a capacity-achieving criterion for the symmetric capacity on certain memoryless channels with input alphabet $\cX$ based on two ingredients:
(i) code/channel symmetries that enable the overlap-matrix analysis of~\cite{Reeves-isit23}; and
(ii) a \emph{puncturing condition} asserting that one can puncture a carefully chosen set of coordinates and only increase the rate by negligible amount.

Throughout this section, all channels are memoryless and rates are measured using base-$q$ units (or logarithms) for input and code alphabets satisfying $|\cX|=q$.

\begin{defn}[Symmetry group]
\label{def:standard_channel_symmetry_group}
Let $W(y\mid x)$ denote the transition probabilities of a channel with a finite input alphabet $\cX$ and a finite output alphabet $\cY$.
A permutation $\varphi \in \sym{\cX}$ is in the symmetry group of $W$ if there is a $\sigma \in \sym{\cY}$ such that $W(\sigma(y) \mid \varphi(x)) = W( y \mid x)$ for all $x\in \cX$ and $y\in \cY$.
A channel is called \emph{highly symmetric} if its symmetry group acts doubly-transitively on $\cX$. 
\end{defn}

\begin{defn}[Affine general linear symmetry]
\label{def:affine_symmetric_channel}
Let $\cX=\F_{q}$ where $q=r^b$ is a prime power.
For any invertible matrix $A \in \mathbb{F}_{r}^{b\times b}$ and vector $\bm{\beta} \in \mathbb{F}_r^b$, the map $\varphi(\bx)=A\bx+\bm{\beta}$ is an affine bijection on $(\F_r)^b \equiv \F_q$.
We say channel $W(y \mid x)$ is $\mathrm{AGL}(b)$\emph{ symmetric} if, for every such $\varphi(\bx)$ acting on $\F_q$, there exists a permutation $\sigma_\varphi$ of $\cY$ such that $W(\sigma_\varphi(y)\mid \varphi(x))=W(y\mid x)$ for all $x\in\F_q$ and $y\in\cY$.
\end{defn}

Note that channels that all channels with $\mathrm{AGL}(b)$ symmetry are highly symmetric because the affine group acts doubly transitively on $(\F_r)^b$.

\begin{defn} \label{def:full_rank}
  A discrete memoryless channel (DMC) is called \emph{injective} if it has no redundant inputs (i.e., the rows of its transition probability matrix $W(y\mid x)$ are linearly independent). 
  An indecomposable FSC is called \emph{injective} if, for large enough $n$, the transition probability matrix $W(\by \mid \bx)$ implied by $n$ channel uses, starting from the stationary state distribution $\pi$, defines an injective DMC.
\end{defn}

\begin{defn}
For $i\in [n]$, let $\varphi_i \colon \mathbb{F}_2^b \to \mathbb{F}_2^b$ denote independent random affine transformations defined by $\varphi_i (\bx) = A_i \bx + \bm{\beta}_i$ with uniform invertible $A_i \in \mathbb{F}_2^{b\times b}$ and uniform $\bm{\beta}_i \in \mathbb{F}_2^b$.
A \emph{random block scrambler} $\Phi \colon (\mathbb{F}_{2}^b)^n \to (\mathbb{F}_{2}^b)^n$ is the componentwise mapping defined by $\bz = \Phi(\bx ')$ where $\bz, \bx' \in (\mathbb{F}_{2}^b)^n$ and $\bz_i = \varphi_i (\bx_i)$ for $i\in [n]$.
\end{defn}

\begin{thm} \label{thm:capacity_symmetry_subcode}
  For $h\in \mathbb{N}$ and $b=2^h$,
  let $W (y \mid x)$ be a symmetric injective channel with $\cX = \mathbb{F}_2^b$ and $q=|\cX|$.
  Let $C$ be the capacity of $W$ (in qits per channel use).
  Then, for any $R < C$, there exists a binary code sequence $\cC_m = \RM(r_m , m)$ with rate sequence $R_m \to R$ (measured in bits per channel use) such that blocking the code bits into $b$-bit blocks and applying a random block scrambler gives a communication protocol with vanishing symbol error probability as $m\to \infty$.
\end{thm}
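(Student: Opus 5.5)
The plan is to derive this from the general non-binary capacity-via-symmetry result, Theorem~\ref{thm:capacity_symmetry_subcode_app}, applied to the code over $\cX=\F_2^b$ obtained by grouping the bits of $\RM(r_m,m)$ into blocks of $b=2^h$ consecutive coordinates.

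\textbf{Step 1: the blocked RM code as a group code.} With the indexing of Definition~\ref{def:theta}, a block of $b=2^h$ consecutive coordinates is a coset of the subspace spanned by the first $h$ coordinates (the LSBs) of $\F_2^m$. Grouping coordinates this way identifies $\F_2^m\cong \F_2^h\times\F_2^{m-h}$. A codeword $\bc\in\RM(r_m,m)$ then becomes a vector over $\cX=\F_2^b$ of length $n=2^{m-h}$. The block code $\cC_m'$ is $\F_2$-linear, hence a group code over the additive group $(\F_2^b,+)$.

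Its automorphisms include every affine map of $\F_2^m$ of the form $(\bu,\bv)\mapsto(A\bu+B\bv+\bm{\beta},\,A'\bv+\bm{\beta}')$. Such maps send blocks to blocks, so they induce doubly-transitive permutations $\bv\mapsto A'\bv+\bm{\beta}'$ of the block index set $\F_2^{m-h}$. Combined with within-block symbol maps, they preserve the code. This supplies the permutation symmetry required by Theorem~\ref{thm:capacity_symmetry_subcode_app}. The random block scrambler then symmetrizes symbols: each block is replaced by an independent random affine image $A_i\bx_i+\bm{\beta}_i$, which is revealed to the decoder. The effective channel seen by the decoder has $\mathrm{AGL}(b)$ symmetry by construction, because the scrambler's group acts doubly transitively on $\F_2^b$. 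It is therefore highly symmetric, and since the scrambler is a bijection, the uniform-input mutual information equals $C$. Injectivity of $W$ gives the non-degeneracy condition.

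\textbf{Step 2: rate bookkeeping and the puncturing condition.} Rates are measured as $R(\cC_m')=\log_q|\cC_m|/n=K/2^m$, which equals the binary rate, so $R_m\to R<C$ is the right target. I choose $r_m$ so that $R(\RM(r_m,m))\to R$; this is possible because consecutive RM rates differ by $o(1)$.

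The two-look puncturing condition requires that a carefully chosen set of block coordinates can be removed with negligible rate increase. Here I will use the subspace-restriction property. Restricting the message polynomial to the affine subspace that fixes some MSB coordinates of the block index (or, more generally, to a union of cosets) yields $\RM(r_m,m-k)$. By Lemma~\ref{lem:rm_rate_diff}, its rate exceeds that of $\RM(r_m,m)$ by at most $k/(2\sqrt{m-k})$, which is negligible when $k=o(\sqrt m)$. Since the restricted subspace still contains the full block directions $\F_2^h$, the restricted code is again a blocked RM code, so the structure is preserved after puncturing.

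\textbf{Step 3: conclude and identify the main obstacle.} Once the hypotheses are verified, Theorem~\ref{thm:capacity_symmetry_subcode_app} gives vanishing symbol-MAP error probability for the scrambled blocked code at any rate below $C$.

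I expect the main obstacle to be matching the precise form of the two-look puncturing condition. That condition presumably asks for puncturing a set of size proportional to $n$, or chosen relative to two looked-at positions, while keeping a code with the same symmetry. My first attempt is to take the puncturing set to be the complement of an affine subspace of codimension $k$ that contains the two chosen block positions. Double transitivity of the block automorphisms makes this choice without loss of generality, and Lemma~\ref{lem:rm_rate_diff} controls the rate change for slowly growing $k$. A secondary check is that the code-symbol symmetry demanded by the theorem (invariance under per-coordinate symbol maps) is genuinely supplied by the revealed scrambler rather than by the code itself.
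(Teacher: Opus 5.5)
\textbf{Gap in (A2).} Your treatment of (A1) and (A4) is fine: the pure block permutations $(\bu,\bw)\mapsto(\bu,A'\bw+\bm\beta')$ already give double transitivity, and fixing block-index coordinates is the right puncturing. The verification of (A2), however, has a genuine gap. Theorem~\ref{thm:capacity_symmetry_subcode_app} requires the \emph{code's} symbol-automorphism group to contain a doubly-transitive subgroup that lies inside the channel's symmetry group. The revealed scrambler only enlarges the symmetry group of the channel to $\mathrm{AGL}(b)$. It therefore supplies the ``matched'' half of (A2), not the code half, and the blocked code $\cB_m$ lacks the code half. To see this, for large $m$ pick $g(\bw)$ of degree exactly $r_m$ in the block-index variables $\bw\in\F_2^{m-h}$. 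Then $f(\bu,\bw)=g(\bw)$ is a codeword whose blocks are all $\bm 0$ or $\bm 1$. A homogeneous symbol map $\sigma$ with $\sigma(\bm 0)=\bm 0$ sends it to $\sigma(\bm 1)(\bu)\,g(\bw)$, viewing $\sigma(\bm 1)\in\F_2^b$ as a Boolean function of $\bu\in\F_2^h$. This function has degree $r_m+\deg\sigma(\bm 1)$, so it stays in the code only if $\sigma(\bm 1)=\bm 1$. Hence every element of $\SAut(\cB_m)$ that fixes $\bm 0$ also fixes $\bm 1$, and no subgroup of $\SAut(\cB_m)$ is doubly transitive on $\F_2^b$ once $b\ge 2$. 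This is exactly why the paper says the theorem does not apply directly to the blocked RM code. Your ``secondary check'' cannot be passed: the overlap-matrix argument needs the joint law of codeword and output to be invariant under a matched code/channel symmetry pair, but $\psi(\cB_m)\neq\cB_m$ for generic $\psi\in\mathrm{AGL}(b)$.

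\textbf{Missing idea: pass through a supercode.} You need to apply the theorem to a larger code and then descend to $\cB_m$. The paper uses the interleaved code $\cI_m$ of Definition~\ref{def:interleaved_rm_supercode}: $b$ independent rows from $\RM(r_m,m-h)$, with columns read as symbols of $\F_2^b$.
\begin{itemize}
\item Each row $\bw\mapsto f(\bu,\bw)$ of a blocked codeword lies in $\RM(r_m,m-h)$, so $\cB_m\subseteq\cI_m$.
\item Since $\RM(r_m,m-h)$ is linear and contains $\bm 1$, the columnwise map $\bx\mapsto A\bx+\bm\beta$ replaces rows by linear combinations of rows plus multiples of $\bm 1$. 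Hence $\SAut(\cI_m)\supseteq\mathrm{AGL}(b)$, which is doubly transitive and matched to the scrambled channel.
\item Your block-permutation and affine-restriction arguments carry over rowwise, giving (A1) and (A4) for $\cI_m$.
\item Lemma~\ref{lem:rm_rate_diff} gives $0\le R(\cI_m)-R(\cB_m)\le h/(2\sqrt{m-h})\to 0$, so $R(\cI_m)\to R<C$.
\end{itemize}
The theorem then yields vanishing symbol-MAP error for $\cI_m$. Lemma~\ref{lem:subcode_ser_monotone} transfers this to $\cB_m$: revealing which coset of $\cB_m$ in $\cI_m$ was sent cannot increase the MAP error, and all cosets look the same to the decoder because the scrambled channel is symmetric coordinate by coordinate.
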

\ifarxiv
 \begin{proof}
   See Appendix~\ref{app:general_capacity_thm}.
 \end{proof}
\else
  \begin{proof}
    Omitted due to space limitations.  See~\cite[App.~\ref{app:general_capacity_thm}]{Pfister-isit26}
  \end{proof}
\fi

\section{RM codes on FSCs via Block Interleaving}
\label{sec:rm_fsc_decimation}

\subsection{Block Interleaving for FSCs}

Now, we introduce two channel transformations for FSCs that will be useful.
The first transform changes an FSC into a new FSC that is equivalent to $b$ uses of the first channel.

\begin{defn}
    \emph{Blocking by $b$} transforms an FSC denoted by $(\cX,\cY,\cS,P)$ into an FSC denoted by $(\cX^b,\cY^b,\cS,P^b)$, where 
\begin{align*}
P^b_{Y, S'\mid X,S} & (\by,s_b \mid \bx,s_0) \\ & \coloneqq \sum_{s_1,\ldots,s_{b-1} \in \cS} \prod_{i=0}^{b-1} P_{Y, S'\mid X,S}(s_{i+1},y_i\mid x_i,s_{i}).
\end{align*}
    For a state distribution $\pi$ that is stationary under uniform inputs,
     let $\overline{W}^b (\by \mid \bx)$ denote the \emph{stationary average block channel} (for uniform inputs) defined by
    \[ \overline{W}^b (\by \mid \bx) \coloneqq \sum_{s,s' \in \cS} \pi(s) P^b_{Y, S'\mid X,S} (\by,s' \mid \bx,s).  \]
\end{defn}

\begin{rem}
    The blocking transformation preserves the symmetric capacity of an IFSC.
    This is because, for the original channel, the rate is measured in base-$|\cX|$ units and, for the blocked channel, the rate is measured in base-$|\cX|^b$ units.
    This cancels the factor of $b$ that comes from additional uses of the original channel.
\end{rem}

The second transform changes an FSC into a new FSC that is decimated by $d$, assuming uniform inputs.
Then, each input is transmitted through the original FSC followed by $d-1$ uniform random inputs.

\begin{defn}
    \emph{Decimating by $d$} transforms an FSC denoted by $(\cX,\cY,\cS,P)$ into an FSC denoted by $(\cX,\cY,\cS,P^{\downarrow d})$, where
\begin{align*}
P^{\downarrow d}_{Y', S'\mid X',S} & (y,s' \mid x,s) \\ & \coloneqq  \sum_{t\in \cS} P_{Y, S'\mid X,S} (y, t \mid x,s) \overline{\Phi}^{d-1}_{t,s'},
\end{align*}
where $\overline{\Phi}_{s,s'} = \sum_{x\in \cX} \frac{1}{|\cX|} \sum_{y\in \cY}  P_{Y, S'\mid X,S} (y, s' \mid x,s)$.
\end{defn}

Let $P^{b\downarrow d}_{Y,S'|X,S}$ denote the channel law of an FSC that is first blocked by $b$ and then decimated by $d$.
For the coding scheme described in Theorem~\ref{thm:capacity_symmetry_subcode},
the induced channel from $(\cX^b)^n$ to $(\cY^b)^n$, $n \in \mathbb{N}$, is 
\[
W^{b\downarrow d}_n (\by \mid \bx) = \!\! \sum_{s_0, \dots, s_n} \!\! \pi(s_0) \prod_{i=0}^{n-1} P^{b\downarrow d}_{Y,S'|X,S}(y_i, s_{i+1} \mid x_i, s_{i}). \]
Let $n$ uses of the memoryless channel $\overline{W}^b$ be denoted by $\overline{W}_n^b(\by \mid \bx) = \prod_{i=1}^n \overline{W}^b (\by_i \mid \bx_i)$.

If the FSC is indecomposable, then the following lemma shows that the distance between $W^{b\downarrow d}_n (\by \mid \bx)$ and $W^{b\downarrow d}_n (\by \mid \bx)$ decays exponentially in $d$.
A subtle point is that, while the argument seems to require a primitive FSC, starting an IFSC in its unique stationary distribution prevents it from visiting transitive states and effectively converts it into a primitive FSC.

\begin{lem} \label{lem:chan_tv_bound}
For any IFSC there is a $\rho \in (0,1)$ such that for all $b,d,n \in \mathbb{N}$, and $\bx \in (\cX^b)^n$, we have
\[ \| W^{b\downarrow d}_n (\cdot \mid \bx) - \overline{W}_n^b (\cdot \mid \bx)\|_{\mathrm{TV}} \leq (n-1)\rho^{b(d-1)}. \]
\end{lem}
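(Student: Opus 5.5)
The plan is a hybrid (telescoping) argument over the $n-1$ block boundaries, combined with geometric mixing of the uniform-input state chain. By Lemma~\ref{lem:ind2prim}, $\pi$ is supported on the unique recurrent class, and $\overline{\Phi}$ restricted to that class is primitive. Since $W^{b\downarrow d}_n$ starts in $\pi$, transient states are never visited, so I may work with the primitive restriction throughout.

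\textbf{Step 1: rewrite the decimated channel.} Under $P^{b\downarrow d}$, block $i$ enters state $s_i$ and passes through $P^b$, which outputs $\by_i$ and an exit state $t_i$. The next entry state is then drawn as $s_{i+1}\sim \overline{\Phi}^{b(d-1)}_{t_i,\cdot}$. This exponent holds because the blocked FSC has uniform-input transition matrix $\overline{\Phi}^b$, and decimation applies it $d-1$ times. So $W^{b\downarrow d}_n$ is a chain
\[ s_0\to(\by_0,t_0)\to s_1\to(\by_1,t_1)\to\cdots \]
with $s_0\sim\pi$. In contrast, $\overline{W}^b_n$ is the same chain except that each $s_i$ is drawn freshly from $\pi$, independent of $t_{i-1}$.

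\textbf{Step 2: hybrids.} For $j=0,\dots,n-1$, let $H_j$ be the output law (given $\bx$) when the entry states $s_1,\dots,s_j$ are redrawn independently from $\pi$ and the later ones follow the coupled kernel $\overline{\Phi}^{b(d-1)}_{t_{i-1},\cdot}$. Then $H_0=W^{b\downarrow d}_n(\cdot\mid\bx)$ and $H_{n-1}=\overline{W}^b_n(\cdot\mid\bx)$. By the triangle inequality it suffices to bound $\|H_{j-1}-H_j\|_{\mathrm{TV}}$ for each $j$.

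These two laws differ only in how $s_j$ is generated given $(\by_{<j},t_{j-1})$: one uses $\overline{\Phi}^{b(d-1)}_{t_{j-1},\cdot}$ and the other uses $\pi$. Everything else is a common kernel applied before and after. Conditioning on the past and applying data processing to the common downstream kernel (which produces $\by_{\ge j}$ from $s_j$), I get
\[ \|H_{j-1}-H_j\|_{\mathrm{TV}}\le \max_{t}\big\|\overline{\Phi}^{b(d-1)}_{t,\cdot}-\pi\big\|_{\mathrm{TV}}. \]

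\textbf{Step 3: mixing bound (the main obstacle).} It remains to show $\max_t\|\overline{\Phi}^{k}_{t,\cdot}-\pi\|_{\mathrm{TV}}\le\rho^k$ for all $k\ge 0$ with a single $\rho<1$. Summing over the $n-1$ boundaries then gives the bound in Lemma~\ref{lem:chan_tv_bound}.

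Primitivity gives some $L$ with $\overline{\Phi}^L>0$ entrywise, so the Dobrushin coefficient satisfies $\delta(\overline{\Phi}^L)<1$. Submultiplicativity, together with $\pi\overline{\Phi}^k=\pi$, then yields $\max_t\|\overline{\Phi}^k_{t,\cdot}-\pi\|_{\mathrm{TV}}\le \delta(\overline{\Phi}^k)\le C\lambda^k$ for some $\lambda<1$.

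The delicate part is removing the constant $C$ for small $k$. I would first try choosing $\rho\in(\lambda,1)$ large enough that $C\lambda^k\le\rho^k$ for all $k\ge k_0$. For the finitely many $k<k_0$, I would check directly that $\delta(\overline{\Phi}^k)<1$, which requires a row-overlap property of the primitive chain, and then take $\rho$ as the maximum of $\delta(\overline{\Phi}^k)^{1/k}$ over those $k$. If some $\delta(\overline{\Phi}^k)=1$ for small $k$, I would fall back to stating the bound with a constant, $C(n-1)\rho^{b(d-1)}$, which is equally sufficient downstream. The case $d=1$ is trivial, since then the right-hand side is $n-1$, which is at least the total-variation distance whenever $n\ge2$, and both sides are $0$ when $n=1$.
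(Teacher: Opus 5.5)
Your Steps 1--2 are correct and amount to the same reduction as the paper. The paper couples the true and memoryless experiments block by block: given the exit state $T_i$, it maximally couples $S_{i+1}\sim\overline{\Phi}^{b(d-1)}_{T_i,\cdot}$ with a fresh $\widetilde S_{i+1}\sim\pi$, then union-bounds the $n-1$ mismatch events. You instead telescope over hybrids and use data processing, the same device the paper uses in Lemma~\ref{lem:spectral_bound}. Either way the lemma reduces to showing $a_k\coloneqq\max_{t\in\cS_\star}\|\overline{\Phi}^{k}_{t,\cdot}-\pi\|_{\mathrm{TV}}\le\rho^k$ for all $k\ge 0$ with a single $\rho<1$. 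Here $\cS_\star$ is the recurrent class. No input sequence lets the state leave it, because $\overline{\Phi}_{s,s'}=0$ forces $P_{S'|X,S}(s'\mid x,s)=0$ for every $x$.

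The gap is in Step 3. Your plan for small $k$ is to verify $\delta(\overline{\Phi}^k)<1$, and this can genuinely fail for primitive chains. Take the fading-type chain with $0\to 1$, $1\to 2$, and $2\to 0$ or $2\to 1$ with probability $\tfrac12$ each. It is primitive, since it has cycles of lengths $2$ and $3$. Yet rows $0$ and $1$ of both $\overline{\Phi}$ and $\overline{\Phi}^2$ have disjoint supports, so $\delta=1$ for $k=1,2$. Your fallback $C(n-1)\rho^{b(d-1)}$ is a strictly weaker statement than the lemma, which has no constant.

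The missing observation is that you only need each row to be close to $\pi$, not to the other rows, and $\pi$ is strictly positive on $\cS_\star$. Write $\pi=\sum_{t'}\pi(t')\overline{\Phi}^k_{t',\cdot}$ as in your own bound; the $t'=t$ term vanishes. Hence, with $\pi_{\min}\coloneqq\min_{s\in\cS_\star}\pi(s)>0$,
\[
\|\overline{\Phi}^k_{t,\cdot}-\pi\|_{\mathrm{TV}}\le\sum_{t'\neq t}\pi(t')\,\|\overline{\Phi}^k_{t,\cdot}-\overline{\Phi}^k_{t',\cdot}\|_{\mathrm{TV}}\le 1-\pi(t)\le 1-\pi_{\min}<1
\]
for every $k$. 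Therefore $a_k\le\min\{1-\pi_{\min},\,C\lambda^k\}$. Pick $\rho_1\in(\lambda,1)$ and $k_0$ with $C\lambda^k\le\rho_1^k$ for $k\ge k_0$, and set $\rho=\max\{\rho_1,(1-\pi_{\min})^{1/k_0}\}<1$. Then $a_k\le\rho^k$ for all $k\ge 0$, and the lemma follows exactly as stated.

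The paper's own proof skips this point as well. It asserts $\|\mu\overline{\Phi}^t-\pi\|_{\mathrm{TV}}\le\rho^t$ by citing Theorem~\ref{thm:exp_mixing}. That theorem only gives $(1-|\cS|\alpha)^{\lfloor t/L\rfloor}$, with $L$ the primitivity exponent, which equals $1$ for $t<L$. So your instinct that small $k$ needs a separate argument was right, but the fix is the full-support bound above, not a multiplicative constant.
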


\ifarxiv
 \begin{proof}
   See Appendix~\ref{sec_lem_chan_tv_bound}.
 \end{proof}
\else
  \begin{proof}
    Omitted due to space limitations.  See~\cite[App.~\ref{sec_lem_chan_tv_bound}]{Pfister-isit26}
  \end{proof}
\fi

\subsection{RM Codes and Implicit Block Interleaving}
\label{sec:implicit_block_int}

For integers $h\ge 1$ and $g\ge 0$, we view the codeword coordinates of $\cC = \RM(r,m)$ as indexed by evaluation points $\bv\in\F_2^m$ via the bijection $\theta_m$. Decompose
\[
\bv = (\bu,\bz,\bw)\in \F_2^h \times \F_2^g \times \F_2^{m-h-g},
\]
where $\bu$ indexes \emph{within-block} positions, $\bz$ will be fixed, and $\bw$ indexes \emph{block locations}.
Since $\bz$ is fixed for each phase, this has the effect of restricting the evaluation set from $\F_2^m$ to the affine subspace
\[
\cV_{h,g} (\bz)
\ \triangleq\
\big\{(\bu,\bz,\bw):\ \bu\in\F_2^h,\ \bw\in\F_2^{m-h-g}\big\}.
\]
Let $J(\bz) = \theta_m \big(\cV_{h,g} (\bz)\big)$.
Then, evaluating all $f\in\mathcal{F}(r,m)$ on $\cV_{h,g} (\bz)$ yields a codeword in $\cC|_{J(\bz)}$ which is a punctured subcode naturally identified with $\RM(r_m,m-g)$ of length $2^{m-g}$. Moreover, the restricted coordinates decompose into
\[
n \ \triangleq\ 2^{m-h-g}
\]
disjoint \emph{blocks} of length $b=2^h$, one block for each $\bw\in\F_2^{m-h-g}$:
\[
\text{block } \bw:\quad \big(f(\bu,\bz,\bw)\big)_{\bu\in\F_2^h}\in\F_2^{2^h}.
\]
Under the indexing induced by $\theta_m$ and the above $(\bu,\bz,\bw)$ decomposition, concatenating these blocks in the natural binary order of $\bw$ yields exactly the punctured RM code $\cC|_{J(\bz)}$.

\paragraph*{Inner-code viewpoint and the block alphabet}
For fixed $\bw$, the map $\bu\mapsto f(\bu,0^g,\bw)$ is a Boolean polynomial of degree at most $r_m$ in $u_0,\ldots,u_{h-1}$. Hence, each block belongs to the length-$2^h$ code $\RM(r_m,h)$ with the convention $\RM(r_m,h)=\F_2^{2^h}$ is the full space if $r_m\ge h$.
Since this condition holds for all sufficiently large $m$ (e.g., $h=o(m)$ is chosen to grow arbitrarily slowly with $m$ and $r_m$ must grow linearly with $m$ to have positive rate in the limit), we see that the blocks themselves are unconstrained.

\vspace{2mm}
\subsection{Main Result}

\begin{thm}[RM w/scrambling achieves $C_{\mathrm{sym}}$ on IFSCs]
\label{thm:main}
Consider a binary-input IFSC that is full rank.
Let $C_{\mathrm{sym}}$ denote its symmetric capacity in bits per channel use.
Fix any target rate $R<C_{\mathrm{sym}}$.
Let the functions $h\colon \mathbb{N} \to \mathbb{N}$ and $g\colon \mathbb{N} \to \mathbb{N}$ satisfy, as $m\to \infty$, the conditions $h(m)\to \infty$, $h(m)/m \to 0$, $g(m)\to \infty$, and $g(m)/\sqrt{m} \to 0$.
Then, there exists a sequence of RM codes $\RM(r_m,m)$ with rates $R_m\to R$, such that the following communication strategy has vanishing bit error probability as $m\to\infty$:
\begin{enumerate}
\item Encode the message and partition the resulting RM codeword into blocks.
Then, apply independent random affine scrambling, which is known at the receiver, to each block and transmit the codeword.
 
\item After receiving, for each fixed $\bz$, decode the punctured subcode with codeword indices in $\cV_{h,g}(\bz)$ using MAP decoding for the induced (approximately independent) stationary block observations.

\item Combine decoded interleaves to get the whole codeword.
\end{enumerate}
\end{thm}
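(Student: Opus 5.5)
The plan is to reduce the FSC problem, one interleave at a time, to the memoryless setting of Theorem~\ref{thm:capacity_symmetry_subcode}, and then to control the gap with Lemma~\ref{lem:chan_tv_bound}.

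\emph{Choice of the memoryless target.} For fixed $b=2^h$, let $\overline{W}^b$ be the stationary average block channel, and let $\widetilde{W}^b$ be $\overline{W}^b$ preceded by a uniformly random affine map $\varphi\in\mathrm{AGL}(b)$ whose value is revealed at the output. Then $\widetilde{W}^b$ is $\mathrm{AGL}(b)$-symmetric, hence highly symmetric. Its capacity in qits is the uniform-input mutual information $\frac{1}{b}I(\bX^b;\bY^b)$ in bits per channel use, computed with $S_0\sim\pi$. Injectivity follows from the full-rank assumption (Definition~\ref{def:full_rank}) once $h$ is large. The key analytic fact is
\[
\lim_{b\to\infty}\tfrac1b I(\bX^b;\bY^b\mid S_0\sim\pi)=C_{\mathrm{sym}},
\]
which holds because the block information differs from $I(\bX^b;\bY^b\mid S_0=s_0)$ by at most $\log|\cS|$, together with Lemma~\ref{lem:ind2prim}. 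Since $h(m)\to\infty$, for any $R<C_{\mathrm{sym}}$ we have $R<C(\widetilde W^{b})-\delta$ for all large $m$.

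\emph{Per-interleave analysis.} Fix $\bz\in\F_2^g$. By Section~\ref{sec:implicit_block_int}, the coordinates in $\cV_{h,g}(\bz)$ form $\RM(r_m,m-g)$, arranged as $n=2^{m-h-g}$ blocks of $b$ bits. In the transmitted stream, consecutive protected blocks are separated by $d-1=2^g-1$ other blocks.

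Condition on the codeword bits of the protected blocks. The other blocks carry codeword bits from other interleaves, but after independent uniform affine scrambling each of them is exactly uniform and independent of the protected data. I expect this to be the delicate point. Scrambling by a uniform $\bm\beta_i$ makes each block marginally uniform, and independence across blocks holds given the protected interleave, because the scramblers are independent. The decoder of this interleave does not use those scramblers (it marginalizes them), so the interleave sees exactly $W^{b\downarrow d}_n$ composed with per-block scrambling, apart from an initial-state mismatch. That mismatch is absorbed by the indecomposability term, or by the stationarity of $\pi$ under uniform inputs.

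By Lemma~\ref{lem:chan_tv_bound}, this channel is within total variation $n\rho^{b(d-1)}\le 2^{m}\rho^{2^{h}(2^g-1)}$ of the memoryless channel $\widetilde W^b_n$. This bound tends to zero once $h+g\gtrsim\log m$, which we may assume by slightly enlarging $h,g$ within the stated growth constraints. Otherwise, the argument needs $\rho^{b(d-1)}$ to beat $2^m$, and this is the main obstacle: it must be checked against how slowly $h$ and $g$ are allowed to grow.

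\emph{Applying the memoryless theorem.} Apply Theorem~\ref{thm:capacity_symmetry_subcode} to $\widetilde W^b$ with the code sequence $\RM(r_m,m-g)$. By Lemma~\ref{lem:rm_rate_diff} and $g=o(\sqrt m)$, its rate is $R_m+o(1)\to R<C(\widetilde W^b)$. Because $h$ varies with $m$, Theorem~\ref{thm:capacity_symmetry_subcode} is used uniformly in $b$. Concretely, I would invoke the underlying Theorem~\ref{thm:capacity_symmetry_subcode_app}, whose error bound depends on the channel only through the capacity gap and the non-degeneracy constant. Alternatively, one can use a diagonal argument that chooses $h(m)$ growing slowly enough that the per-$b$ error bound is already small.

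The MAP error probability on the true channel exceeds that on $\widetilde W^b_n$ by at most twice the total-variation distance. This holds because the decoder optimized for the memoryless law is suboptimal on the true law, and MAP on the true law can only do better, so the bit error probability of each interleave vanishes.

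\emph{Combining the interleaves.} The average bit error rate over all $d$ interleaves is the average of the per-interleave rates, each bounded uniformly by the same quantity by symmetry across $\bz$, so it vanishes. Combining the decoded interleaves reconstructs the whole codeword.
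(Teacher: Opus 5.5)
Your proposal is correct and follows essentially the same route as the paper's proof: the stationary block channel augmented with revealed per-block affine scrambling, Lemma~\ref{lem:chan_tv_bound} to compare the induced decimated channel with the memoryless product, Theorem~\ref{thm:capacity_symmetry_subcode} applied to the restricted code $\RM(r_m,m-g)$ with the rate increase controlled by Lemma~\ref{lem:rm_rate_diff}, and a total-variation transfer back to the true channel. The paper handles the two loose ends you flag exactly as you suggest, by choosing $g(m)$ large enough while keeping $g=o(\sqrt{m})$ and by letting $h(m)\to\infty$ sufficiently slowly; the loose ends are that $(n-1)\rho^{b(d-1)}\to 0$ needs roughly $h+g\gtrsim\log m$ rather than arbitrarily slowly growing $h,g$, and that Theorem~\ref{thm:capacity_symmetry_subcode} must be applied with $b$ varying in $m$.
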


\ifarxiv
 \begin{proof}
   See Appendix~\ref{sec:proof_thm_main}.
 \end{proof}
\else
  \begin{proof}
    Omitted due to space limitations.  See~\cite[App.~\ref{sec:proof_thm_main}]{Pfister-isit26}
  \end{proof}
\fi

\section{Conclusion and Open Problems}
\label{sec:conclusion}

We show that binary Reed--Muller codes can achieve rates up to the symmetric capacity on finite-state channels by combining three ideas: (i) a general capacity via symmetry theorem for memoryless channels based on code/channel symmetry and a puncturing property; (ii) block lifts that create well-separated protected blocks whose boundary states are nearly independent by exponential mixing; and (iii) per-block random affine scrambling that enforces strong symmetry of the induced block channel.

Some interesting questions and open problems include:
\begin{itemize}
\setlength{\itemsep}{0.5mm}
\item \emph{Removing scrambling.} Can one exploit intrinsic symmetries of the induced stationary block channel (or concentration) to eliminate per-block random affine scrambling?
-- We note that scrambling is currently used both to guarantee sufficient channel symmetry (via group symmetrization) and to break correlation between interleaves (i.e., channel inputs for other interleaves are treated as i.i.d.\ uniform).

\item \emph{Extending to block error.}
Can this result be improved to show vanishing block error rate for binary-input IFSCs? \\
-- It is probably possible to improve the symbol-error decay rate by extending approaches from~\cite{Abbe-focs23,Pfister-arxiv25a} to non-binary codes satisfying symmetry and nesting conditions. 
For interleaved binary RM codes, it seems likely that the weight enumerator argument will also extend.
One caveat is that Theorem~\ref{thm:capacity_symmetry_subcode_app} does not naturally extend to block-error rate without new insights or weight enumerator bounds for the non-binary code.

\item \emph{Input shaping and constraints.} Can one extend this approach to include shaping of the input distribution and/or FSCs with input constraints? \\
-- A standard approach to shaping is to add a distribution shaping inner code that converts a uniform input distribution into the desired non-uniform distribution.
For FSCs, this was proposed and analyzed in~\cite{Soriaga-isit04}.
Such methods should be compatible with our approach.
For input-constrained FSCs, our method does not work because it always produces a uniform input distribution.
But, one can overcome this by designing a distribution shaping inner code that converts a uniform input distribution into a distribution supported on the constrained set.
We are not aware of any obstruction to such schemes approaching capacity but a proper analysis wolud be required to make stronger statements.

\item \emph{Efficient decoding.} Our analysis is information-theoretic and uses MAP decoding on induced channels. Establishing analogous results with low-complexity decoders (e.g., recursive RM decoders) on FSCs is an important goal.

\item \emph{Non-binary alphabets.} This construction should extend naturally to FSCs with $q$-ary inputs.
\end{itemize}

\newpage

\bibliographystyle{ieeetr}
\bibliography{WCLabrv,WCLbib,WCLnewbib}

\newpage

\appendices

\ifarxiv
\else
\section{Notation}
\label{sec:notation}

We use bold lowercase (e.g., $\bx$) for vectors and calligraphic letters (e.g., $\cX$) for sets.
We represent the binary field by $\F_2$ and, more generally, $\F_q$ denotes a finite field of size $q$ (a prime power).
For a positive integer $N$, we write
\[
[N]\triangleq\{0,1,\ldots,N-1\}.
\]
For a finite set $\cX$, we write $\sym{\cX}$ for the symmetric group of all permutations of $\cX$ and, in particular, $\sym{N}$ denotes the permutations of $[N]$.
The expectation of random variable $Z$ is denoted by $\E[Z]$, and the probability of event $E$ is $\Pr(E)$.
The total-variation distance between distributions $P$ and $Q$ on a finite set $\cZ$ is
\[
\|P-Q\|_{\mathrm{TV}} \triangleq 
\frac{1}{2} \sum_{z \in \cZ} \left| P(z) - Q(z) \right|.
\]
For a vector $\bc=(c_0,\ldots,c_{N-1})\in \cX^N$ and an integer $d\ge 1$, we define \emph{decimation by $d$} as
\begin{equation} %
\bc^{\downarrow d} \ \triangleq\ (c_0,c_d,c_{2d},\ldots,c_{\lfloor (N-1)/d\rfloor d})\in\cX^{\lfloor (N-1)/d\rfloor+1}.
\end{equation}
All logarithms are base-$2$ unless stated otherwise. When we measure the rate of a $q$-ary code (Definition~\ref{def:CodeRate}), we explicitly use $\log_q(\cdot)$; FSC information rates are always expressed in bits per channel use.
\fi

\section{Proof of Lemma~\ref{lem:ind2prim}}

\label{sec:proof_lem_ind2prim}

\begin{proof}[Proof of Lemma~\ref{lem:ind2prim}]
Let $\overline{\Phi}$ be the uniform state-transition matrix and let
\[
E \coloneqq \{(s,s')\in\cS^2:\ \overline{\Phi}_{s,s'}>0\}
\]
be the directed edge set of its support graph. Note that $\overline{\Phi}_{s,s'}=0$ implies
\[
0=\overline{\Phi}_{s,s'}=\frac{1}{|\cX|}\sum_{x\in\cX} P_{S'|X,S}(s'\mid x,s),
\]
and since all terms are nonnegative, this forces $P_{S'|X,S}(s'\mid x,s)=0$ for all $x\in\cX$.

A necessary and sufficient condition for an FSC to be indecomposable is the following \cite[Theorem~4.6.3]{Gallager-1968}:
There exists an $L > 0$ such that for any input sequence $\bx^L = (x_0,\ldots,x_{L-1})$, there is an $s_L \in \cS$ (which may depend on $\bx^L$) such that
\begin{equation}
\label{indecomp_equiv_cond}
\Pr(S_L = s_L \mid S_0 = s_0, X_0 = x_0,\ldots,X_{L-1} = x_{L-1}) > 0
\end{equation}
holds for all $s_0 \in \cS$.

\medskip\noindent
\emph{Step 1: $\overline{\Phi}$ has a unique recurrent class.}
Suppose for the sake of contradiction that $\overline{\Phi}$ has two disjoint closed communicating classes
$C_1$ and $C_2$.
Here, closed means $\overline{\Phi}_{s,s'}=0$ for all $s\in C_i$ and $s'\notin C_i$. Pick any $s \in C_1$ and $s' \in C_2$. Since the channel is indecomposable, there is an $L > 0$ such that for any $\bx^L = (x_0,\ldots,x_{L-1})$, there is an $s_L \in \cS$ such that \eqref{indecomp_equiv_cond} holds for both $s_0 = s$ and $s_0 = s'$.
With $C_1$ and $C_2$ closed, this implies that $s_L$ is in both $C_1$ and $C_2$.
But, this contradicts the initial assumption that $C_1$ and $C_2$ are disjoint.

Thus, $\overline{\Phi}$ has a unique closed
communicating class which we denote by $\cS_\star\subseteq \cS$.
For the Markov chain with transition matrix $\overline{\Phi}$,
all states in $\cS_\star$ are recurrent and all states in $\cS\setminus \cS_\star$ are transient.
Moreover, this Markov chain is irreducible and thus has a unique stationary distribution $\pi$.

\medskip\noindent
\emph{Step 2: The class $\cS_\star$ is aperiodic and $\overline{\Phi}$, when restricted to $\cS_\star$, is primitive.}
We remove We restrict to $\cS_\star$ by removing all (transient) states in $\cS\setminus \cS_\star$.
Suppose for the sake of contradiction that the chain restricted to $\cS_\star$ has period $d\ge 2$.
Then, there exists a cyclic partition $\cS_\star=D_0\cup\cdots\cup D_{d-1}$ such that
$\overline{\Phi}_{s,s'}>0$ implies $s\in D_i$ and $s'\in D_{i+1\ (\mathrm{mod}\ d)}$.
In particular, $\overline{\Phi}_{s,s'}=0$ whenever $s\in D_i$ and $s'\notin D_{i+1}$, and thus
$P_{S'|X,S}(s'\mid x,s)=0$ for all $x$ for such forbidden pairs $(s,s')$.
Consequently, for every input sequence $(x_0,\ldots,x_{L-1})$, the state must satisfy
\[
S_0\in D_i \ \Longrightarrow\ S_L\in D_{i+L \!\!\!\!\! \pmod{d}}.
\]
The argument now proceeds as in Step~1: Pick any $s \in D_0$, $s' \in D_1$. Then the indecomposability condition \eqref{indecomp_equiv_cond} implies that $s_L$ is simultaneously in $D_{L\ \!\!\!\!\! \pmod{d}}$ and $D_{L+1\ \!\!\!\!\! \pmod{d}}$, which is a contradiction.
Hence the period must be $1$, i.e., the class $\cS_\star$ is aperiodic.
Since the Markov chain is irreducible with recurrent class $\cS_\star$, the restriction of $\overline{\Phi}$ has period 1 and is therefore primitive.

\medskip\noindent
\emph{Step 3: Pruning to $\cS_\star$ does not change the symmetric capacity.}
Let $\bX^n$ be an i.i.d.\ uniform input process. For indecomposable FSCs, the limit in $C_{\mathrm{sym}} := \lim\limits_{n\to\infty} \frac{1}{n} I(\bX^n;\bY^n \mid S_0 = s_0)$ exists and does not depend on the initial state $s_0 \in \cS$ \cite{Blackwell-annmathstats58,Gallager-1968}. In particular, $s_0$ can be taken to be any state in $\cS_{\star}$. Since $\cS_{\star}$ is a closed communicating class of $\overline{\Phi}$, if the initial state of the FSC is in $\cS_{\star}$, then all subsequent states are in $\cS_{\star}$ as well. Hence, the symmetric capacity of the original FSC is the same as that of the pruned FSC.
\end{proof}

\section{Proof of Lemma~\ref{lem:chan_tv_bound}}

\label{sec_lem_chan_tv_bound}

\begin{proof}
Fix $b,d,n\in\mathbb{N}$ and an arbitrary input sequence
$\bx=(x_0,\ldots,x_{n-1})\in(\cX^b)^n$.
We compare two conditional output laws on $(\cY^b)^n$: the true blocked decimated FSC denoted by (A) and the memoryless block channel denoted by (B).
A key observation is that the two differ only through the block boundary state dependence.

In the true channel, each output block $Y_i$ is generated by $b$ uses of the original channel.
This implies that the uniform state-transition matrix of the blocked channel equals $\overline\Phi^{b}$ where $\overline\Phi$ is the uniform state-transition matrix of the original channel.
The next $d-1$ input blocks correspond to $b(d-1)$ uses of the original channel which do not affect $Y_i$ but only affect the next boundary state $S_{i+1}$ through the matrix $\overline\Phi^{b(d-1)}$.

Consequently, for each $i$ and each starting state $s$,
the conditional law of $Y_i$ given $(X_i=x_i,S_i=s)$ is the same in (A) and (B);
the difference is that in (A) the sequence $S_i$ is a Markov chain controlled by the input,
whereas in (B) the state sequence $\widetilde{S_i}$ is i.i.d.\ from $\pi$.
We use different notation for the state sequences in these two cases because it is best to think of them as two different random variables defined on the same probability space where they are maximally coupled.

Since the original FSC is indecomposable, Lemma~\ref{lem:ind2prim} implies (after restricting to the
unique recurrent class) that $\overline\Phi$ is primitive with stationary distribution $\pi$.
In this setup, the restriction to the unique recurrent class for $W^{b \downarrow d}_n$ and $\overline{W}_n^b$ occurs implicitly by drawing the initial channel state from the unique stationary distribution.
This initialization restricts the state to the unique recurrent class because the unique stationary distribution is supported on that set.
Moreover, the distribution is stationary and the probability of visiting a transient state remains equal to zero.

Since the chain can now be treated as primitive, Theorem~\ref{thm:exp_mixing} shows that there exists $\rho\in(0,1)$ (depending only on the FSC through $\overline\Phi$) such that
for every probability measure $\mu$ on $\cS$ and every $t\in\mathbb{N}$,
\begin{equation}\label{eq:tv-mixing}
\big\|\mu\,\overline\Phi^{t}-\pi\big\|_{\mathrm{TV}}\ \le\ \rho^{t}.
\end{equation}
Now, we build a coupling between the outputs under (A) and (B) given the same input $\bx$.
Draw $S_0$ from $\pi$ and set $\widetilde{S}_0 = S_0$.
Given a common boundary state $S_i=\widetilde{S}_i=s$, generate the $i$-th output block identically
in both experiments by using the same randomness for the $b$ uses of the original channel.
This is possible
because the conditional law of $Y_i$ given $(X_i=x_i,S_i=s)$ is the same in (A) and (B).
Let $T_i$ denote the state at the end of these $b$ channel uses in experiment (A)
(i.e., this is the terminal state of the blocked kernel).
Then, the next true boundary state satisfies
\[
\Pr(S_{i+1}=s \mid T_i = t)\;=\;\overline\Phi^{b(d-1)}_{t,s}
\]
because the blocking and decimation results in $b(d-1)$ uniform-input transitions of the original channel.

In case (B), the next block-start state is drawn
$\widetilde{S}_{i+1}\sim \pi$, independent of the past.
Now, we can couple $S_{i+1}$ and $\widetilde{S}_{i+1}$ by maximal coupling conditional on $T_i$.
Under such a coupling,
\[
\Pr(S_{i+1}\neq \widetilde{S}_{i+1}\mid T_i)
\;=\;\big\|e_{T_i}\,\overline\Phi^{b(d-1)}-\pi\big\|_{\mathrm{TV}}
\;\le\;\rho^{b(d-1)},
\]
where $e_s \in \mathbb{R}^{|\cS|}$ is the canonical basis vector for $s$ and the inequality follows from \eqref{eq:tv-mixing} with $t=b(d-1)$ and $\mu= e_{T_i}$.
This bound holds uniformly over $T_i$ and thus also unconditionally.

If $S_{i+1}=\widetilde{S}_{i+1}$, we proceed to the next block and again generate the next output block
identically; otherwise, we allow the two experiments to evolve independently thereafter.
Therefore, the probability that the coupled output sequences differ anywhere is at most
the probability that a mismatch occurs at one of the $n-1$ transitions between blocks
\begin{align*}
\Pr((Y_0, \scalebox{0.9}{\ldots},Y_{n-1})\neq (\widetilde{Y}_0, \scalebox{0.9}{\ldots}, \widetilde{Y}_{n-1}))
 & \le\ \sum_{i=0}^{n-2}\Pr(S_{i+1}\neq \widetilde{S}_{i+1}) \\
 & \le\ (n-1)\rho^{b(d-1)}.
\end{align*}
Finally, using the coupling characterization,
the total variation distance between the two conditional output laws is upper bounded by the
minimal disagreement probability over all couplings, and hence by the disagreement probability
of the coupling constructed above:
\[
\big\| W^{b\downarrow d}_n(\cdot\mid \bx)\;-\;\overline W_n^b(\cdot\mid \bx)\big\|_{\mathrm{TV}}
\ \le\ (n-1)\rho^{b(d-1)}.
\]
This completes the proof.
\end{proof}

\section{Proof of Theorem~\ref{thm:main}}

\label{sec:proof_thm_main}

\begin{proof}
Fix a binary-input IFSC $(\cX,\cY,\cS',P')$ with $\cX=\{0,1\}$ and assume it is full rank.
By Lemma~\ref{lem:ind2prim}, the uniform state-transition probability matrix $\overline{\Phi}'$ has a unique stationary distribution which we denote by $\pi'$.
After sending a sufficiently long warmup sequence, we can assume the channel is in a state with positive probability.
Thus, we can restrict the channel to the states with positive probability to get a primitive FSC $(\cX,\cY,\cS,P)$ on a smaller state space with the same symmetric capacity.
For this reason, we restrict our attention to primitive FSCs.

\paragraph{Step 1: Choose block and spacing parameters}
Let $m\to\infty$ and choose integers
\[
h=h(m)\in\mathbb{N},\qquad g=g(m)\in\mathbb{N}.
\]
Set the blocking factor $b$, decimation $d$, and inter-block spacing $T$ to be
\[
b \ \triangleq\ 2^{h},\qquad d \ \triangleq\ 2^{g},\qquad T \ \triangleq\ bd \ =\ 2^{h+g}.
\]
For each fixed $\bz\in\F_2^g$, the coordinate set $\cV_{h,g}(\bz)$ of that interleave consists of
\[
n \ \triangleq\ 2^{m-h-g}
\]
blocks, each of length $b=2^h$, and consecutive blocks of this interleave (denoted by $\bz$) begin $T$ channel uses apart.

\paragraph{Step 2: RM restriction produces protected subcodes}
We index coordinates of $\RM(r_m,m)$ by $\bv=(\bu,\bz,\bw)\in \F_2^h\times\F_2^g\times\F_2^{m-h-g}$ via $\theta_m$.
For each fixed $\bz\in\F_2^g$, restricting to $\cV_{h,g}(\bz)$ yields the punctured subcode of length $2^{m-g}$ that is naturally identified with $\RM(r_m,m-g)$.
Grouping its coordinates into blocks indexed by $\bw\in\F_2^{m-h-g}$ gives a length-$n$ group code over the block alphabet $\cX_{\mathrm{blk}}\ \triangleq\ \F_2^{b}$.
Here, we assume the inner-code is unconstrained as described in Section~\ref{sec:implicit_block_int}.
Thus, for large enough $m$, we assume each block can realize any element of $\F_2^{b}$.

Moreover, by Lemma~\ref{lem:rm_affine_sym}, the automorphism group of $\RM(r_m,m-g)$ contains all invertible affine maps on its evaluation space $\F_2^{m-g}$.
In particular, it contains affine maps that act only on the $\bw$-coordinates (and leave $\bu$ fixed), which act doubly transitively on the $n$ block indices.
Thus, for each $\bz$, the block-level code is a doubly-transitive group code of length $n$ over $\F_2^{b}$.

\paragraph{Step 3: Define the induced block channel}
Fix $\bz$.
Under the encoding rule, the transmitted sequence consists of the $n$ protected blocks in the interleave which is indexed by $\bz$.
Each of these blocks is separated by $b(d-1)$ interstitial symbols (belonging to the other $\bz'\neq \bz$ phases).
Because every block $(\bz',\bw')$ is independently scrambled by a uniformly random affine function, the interstitial inputs can be treated as i.i.d.\ $\mathrm{Bernoulli}(1/2)$ from the perspective of decoding the $\bz$-interleave.
Equivalently, the boundary-state evolution between two consecutive protected blocks for this fixed $\bz$ is driven by the \emph{uniform state-transition matrix} $\overline{\Phi}$.

Therefore, the effective channel seen by the $\bz$-decoder is the blocked-and-decimated FSC with block size $b$ and decimation factor $d$.
The true channel is denoted by
\[
W^{b\downarrow d}_n:\ (\F_2^{b})^n \to (\cY^{b})^n,
\]
with stationary initial distribution $S_0\sim\pi$ and kernel $P^{b\downarrow d}_{Y,S'|X,S}$ as defined in Section~\ref{sec:rm_fsc_decimation}.
Let $\overline W^b$ denote the stationary average \emph{one-block} channel (a DMC on $\F_2^b\to\cY^b$) and $\overline W_n^b=(\overline W^b)^{\otimes n}$ its $n$-fold product.

By Lemma~\ref{lem:chan_tv_bound}, since the FSC is primitive (hence indecomposable) there exists $\rho\in(0,1)$ such that for all $\bx\in(\F_2^b)^n$,
\begin{equation}\label{eq:tv-bound-proof-main}
\big\|W^{b\downarrow d}_n(\cdot\mid\bx) - \overline W_n^b(\cdot\mid\bx)\big\|_{\mathrm{TV}}
\ \le\ (n-1)\rho^{b(d-1)}.
\end{equation}

\paragraph{Step 4: Capacity of the stationary block channel approaches $C_{\mathrm{sym}}$}
Let $\bX^{b}$ be i.i.d.\ $\mathrm{Bernoulli}(1/2)$ and let $\bY^{b}$ be the FSC output over $b$ consecutive uses when $S_0\sim\pi$.
Then, the mutual information of the stationary block channel $\overline W^b$ under uniform inputs satisfies
\[
I_{\overline W^b}(\bX^{b};\bY^{b}) \ =\ I(\bX^{b};\bY^{b}\mid S_0\sim\pi).
\]
By the definition of $C_{\mathrm{sym}}$ and standard IFSC information-rate limits for stationary initialization,
\begin{equation} \label{eq:block_rate_gap}
\lim_{b\to \infty} \frac{1}{b} I(\bX^{b};\bY^{b}\mid S_0\sim\pi)\ = C_{\mathrm{sym}}.
\end{equation}

Hence, if $h(m) \to \infty$, then $b=2^{h(m)} \to \infty$ and, for any target rate $R< C_{\mathrm{sym}}$, there is an $m_0>0$ such that, for all $m>m_0$, the mutual information is greater than $R$.

\paragraph{Step 5: Per-block scrambling yields a highly symmetric memoryless block channel}
Augment the stationary one-block channel to include the scrambling operation in the output, i.e.,
\[
\widetilde W^b:\ \F_2^{b} \ \to\ \cY^{b}\times \mathrm{AGL}(b,2),
\]
is defined by drawing $\varphi\sim\mathrm{Unif}(\mathrm{AGL}(b,2))$ and preprocess the input by applying $\varphi$.
The preprocessed input is used to drive $b$ consecutive uses of the FSC started at $S_0\sim\pi$ and then the channel outputs $(Y^{b},\varphi)$.
Since $\varphi$ is revealed and does not change the input distribution, this augmentation does not decrease mutual information, and the induced memoryless $n$-block channel is $(\widetilde W^b)^{\otimes n}$.

By construction, $\widetilde W^b$ is highly symmetric with respect to the (doubly-transitive) action of $\mathrm{AGL}(b,2)$ on the input symbols.
In addition, the induced block channel is full rank for all sufficiently large $b$ if the FSC is full rank (follows Definition~\ref{def:full_rank} with $n=b$.).

\paragraph{Step 6: Vanishing error on the memoryless block channel for each interleave}
Fix $\bz$.
Consider the protected subcode on $\cV_{h,g}(\bz)$, viewed as a length-$n$ doubly-transitive group code over $\F_2^{b}$, transmitted over the memoryless channel $(\widetilde W^b)^{\otimes n}$.
Now, we can apply Theorem~\ref{thm:capacity_symmetry_subcode} to show that this code achieves capacity because the channel is full rank and highly symmetric and the code family is doubly transitive with the required puncturing property.
Thus, any rate strictly below the symmetric capacity of $\widetilde W^b$ is achievable with vanishing symbol error probability under MAP decoding of a block-scrambled RM code on the memoryless block channel.

Now choose a sequence $\RM(r_m,m)$ with rates $R_m\to R$ in bits per use.
Since the restriction to $\cV_{h,g}(\bz)$ (caused by decimation) increases the RM rate by at most $o(1)$ when $g=o(\sqrt m)$ (cf.\ Lemma~\ref{lem:rm_rate_diff} and the standard restriction/decimation argument), the effective rate of each $\bz$ interleave is at most $R+\eta/2$ for all large enough $m$.
Together with \eqref{eq:block_rate_gap}, this shows the rate of the restricted code is strictly below the symmetric capacity of $\widetilde W^b$.
Thus, the symbol error probability of the MAP decoder for the $\bz$-subcode on $(\widetilde W^b)^{\otimes n}$ satisfies
\[
\widetilde{P}^{(\bz)}_{\mathrm{err}}(m) \, \longrightarrow \, 0.
\]

\paragraph{Step 7: Transfer from memoryless to the true induced channel using total variation}
Run the \emph{same} $\bz$-decoder on the true induced channel $W_n^{b\downarrow d}$ (with scramblings revealed and inverted as in the theorem statement).
Then, for any fixed decoder, the difference in its error probability under two channels is bounded by the total variation distance between the induced output laws (a standard consequence of the coupling characterization of TV).
Thus, for any transmitted sequence, the true symbol error probability satisfies
\[
P^{(\bz)}_{\mathrm{err}}(m)
\ \le\ 
\widetilde{P}^{(\bz)}_{\mathrm{err}}(m)\ +\ 
\sup_{\bx}\big\|W^{b\downarrow d}_n(\cdot\mid\bx)-\overline W_n^b(\cdot\mid\bx)\big\|_{\mathrm{TV}}.
\]
Invoking \eqref{eq:tv-bound-proof-main} shows that
\[
P^{(\bz)}_{\mathrm{err}}(m)
\ \le\ 
\widetilde{P}^{(\bz)}_{\mathrm{err}}(m)\ +\ (n-1)\rho^{b(d-1)}.
\]
Because $n=2^{m-h-g}$ grows at most exponentially in $m$ while $d=2^g\to\infty$ and $\rho^{d-1}$ decays double-exponentially in $g$, we may choose $g=g(m)$ (still with $g=o(\sqrt m)$) such that
\[
(n-1)\rho^{b(d-1)} \,\longrightarrow\, 0.
\]
Hence, for each fixed $\bz$, we have
\[
P^{(\bz)}_{\mathrm{err}}(m) \,\longrightarrow\, 0.
\]

\paragraph{Step 8: Approaching $C_{\mathrm{sym}}$}
Since $\frac{1}{b}I(\bX^b;\bY^b\mid S_0\sim\pi)\to C_{\mathrm{sym}}$ as $b=2^{h}\to\infty$, we may let $h(m)\to\infty$ sufficiently slowly so that the achievable block-level symmetric capacity exceeds any target $R<C_{\mathrm{sym}}$ for all large $m$, while simultaneously choosing $g(m)\to\infty$ to ensure approximate independence across protected blocks via Lemma~\ref{lem:chan_tv_bound}.
Thus, the achieved rate can be taken arbitrarily close to $C_{\mathrm{sym}}(W)$ and this completes the proof.
\end{proof}

\section{Capacity via Symmetry for memoryless channels}
\label{app:general_capacity_thm}

\subsection{Code symmetry and matching}

Let $\cX$ denote the channel input alphabet.
To define a group code for this channel, one can identify the elements of $\cX$ with an arbitrary group containing $|\cX|$ elements.
This choice is often determined by the desired structure of the code.
Another consideration is matching the group structure of the code to that of the channel.
Fortunately, the latter concern is not binding because one can always use the process of group symmetrization (e.g., see~\cite{Reeves-isit23}) to expand the symmetry group of the channel as needed.
Thus, the group structure chosen for $\cX$ is typically dictated by the code structure.

\begin{defn}[$q$-ary group code]
\label{def:transitive_group_code}
A code $\cC\subseteq \cX^N$ is a \emph{$q$-ary group code} if it is a subgroup of $(\cX^N,+)$.
\end{defn}

\begin{defn}[permutation-automorphism group]
\label{def:paut}
The \emph{permutation-automorphism group} of a code $\cC$ is the set of coordinate permutations that preserve the code
\[ \Aut(\cC) \coloneqq \{ \sigma \in \sym{N} \mid  \forall \bc \in \cC, (c_{\sigma(0)},\ldots,c_{\sigma(N-1)}) \in \cC \}. \]
We say that the code is (doubly) transitive if this permutation group is (doubly) transitive.
\end{defn}

\begin{defn}[symbol-automorphism group]
\label{def:saut}
The \emph{symbol-automorphism group} of a code $\cC$ is the set of homogeneous symbol relabelings that preserve the code
\[
\SAut(\cC) \coloneqq \{ \sigma \in \sym{\cX} \mid  \forall \bc \in \cC, (\sigma (c_0),\ldots \sigma (c_{n-1})) \in \cC \}. 
\]
We say that the code is \emph{symbol symmetric} if this permutation group is transitive and that it is \emph{highly symbol-symmetric} if this permutation group is doubly transitive.
\end{defn}

\begin{lem}[Uniform marginals]
\label{lem:uniform_marginals}
Let $\cC \subseteq \cX^N$ be a group code, and let $\bX$ be uniformly distributed on $\cC$. Then for each $i \in [N]$, the coordinate $X_i$ is uniformly distributed on the scalar projection $\cC|_{\{i\}} \subseteq\cX$. Moreover, if $\cC$ is symbol-symmetric, then $X_i \sim \mathrm{Unif}(\cX)$ for all $i \in [N]$.
\end{lem}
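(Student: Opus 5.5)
The statement is Lemma~\ref{lem:uniform_marginals}. The plan is to view the coordinate projection $p_i\colon \cC\to\cX$, $\bc\mapsto c_i$, as a group homomorphism. This holds because $\cC$ is a subgroup of $(\cX^N,+)$ and addition is componentwise. Its image is exactly the scalar projection $\cC|_{\{i\}}$, which is therefore a subgroup $H_i\subseteq\cX$. Its kernel is $K_i=\{\bc\in\cC: c_i=0\}$.

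For each $a\in H_i$, the fiber $p_i^{-1}(a)$ is a coset $\bc_a+K_i$ for any $\bc_a$ with $(\bc_a)_i=a$. Hence every fiber has cardinality $|K_i|=|\cC|/|H_i|$. Since $\bX$ is uniform on $\cC$, we get $\Pr(X_i=a)=|p_i^{-1}(a)|/|\cC|=1/|H_i|$ for every $a\in H_i$, and $\Pr(X_i=a)=0$ for $a\notin H_i$. This proves the first claim. It is just the first isomorphism theorem, so it involves no real difficulty.

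For the second claim, assume $\SAut(\cC)$ acts transitively on $\cX$. Take any $x,x'\in\cX$ and pick $\sigma\in\SAut(\cC)$ with $\sigma(x)=x'$. The map $\bc\mapsto(\sigma(c_0),\ldots,\sigma(c_{N-1}))$ sends $\cC$ into $\cC$, and it is injective, so it is a bijection of $\cC$. It therefore preserves the uniform distribution on $\cC$, and its $i$th coordinate is $\sigma(X_i)$. Thus $\sigma(X_i)$ has the same law as $X_i$, which gives $\Pr(X_i=x')=\Pr(X_i=x)$. All symbols are then equally likely, so $X_i\sim\mathrm{Unif}(\cX)$.

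A direct alternative for the second claim also works. Transitivity means the orbit of $0$ under $\SAut(\cC)$ is all of $\cX$. Moreover, $\sigma(H_i)\subseteq H_i$ whenever $\sigma\in\SAut(\cC)$, because each coordinate value of $\sigma\bc$ lies in $H_i$. Since $0\in H_i$, these two facts force $H_i=\cX$, and the first part then gives uniformity on $\cX$.

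The only point needing care, which is the closest thing to an obstacle here, is that symbol relabelings in $\SAut(\cC)$ need not be group automorphisms. The argument above uses only that they induce bijections of $\cC$, which follows from injectivity on a finite set, so this causes no problem.
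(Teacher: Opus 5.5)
Your proof is correct. For the first claim you follow the paper's route: the projection $\bc\mapsto c_i$ is a homomorphism, so its image $\cC|_{\{i\}}$ is a subgroup, and the uniform law on $\cC$ pushes forward to the uniform law on that image. You simply make explicit the coset counting that the paper compresses into ``pushes forward.''

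For the second claim, the paper only asserts that symbol symmetry forces the projection to equal $\cX$, and then relies implicitly on the first part. That is your ``direct alternative,'' and you supply the reason the paper omits: the projection is a nonempty $\SAut(\cC)$-stable subset of $\cX$, so transitivity makes it all of $\cX$.

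Your main argument takes a genuinely different route. It shows that the law of $X_i$ is invariant under a transitive group of relabelings, using the bijection of $\cC$ induced by each $\sigma\in\SAut(\cC)$. This never uses the group structure or the first claim. It therefore proves $X_i\sim\mathrm{Unif}(\cX)$ for a uniformly drawn codeword of any code, group code or not, whose symbol-automorphism group is transitive. The paper's route is a one-liner once the first part is in hand, but it depends on the subgroup/coset structure.
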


\begin{proof}
Since $\cC$ is a subgroup, $\cC|_{\{i\}}$ is a subgroup of $(\cX,+)$. A uniform $\bX$ on $\cC$ pushes forward through the projection so that $\cC|_{\{i\}}$ is uniform over $\cX$. If $\cC$ is symbol-symmetric then $\mathrm{proj}_i(X)$ equals $\cX$ for all $i\in [N]$.
\end{proof}

\begin{defn}
\label{def:matched}
A subgroup $\cG$ of the symbol-automorphism group of the code is \emph{matched to channel $W$} if the symmetry group of $W$ contains $\cG$ as a subgroup.
\end{defn}

\subsection{Capacity via symmetry}

The following theorem can be distilled from~\cite{Reeves-isit23}.
\begin{thm}[Capacity via symmetry]
\label{thm:capacity_symmetry_subcode_app}
Fix $q\ge 2$ and a memoryless channel $W(y \mid x)$ with symmetric capacity $C>0$ (base-$q$ units).
Let $\{\cC_n\}_{n\ge 1}$ be a sequence of group codes $\cC_n\subseteq\cX^{N_n}$ with $N_n\to\infty$ and rates $R_n\triangleq R(\cC_n)\to R$, where $0\le R<C$.

Assume that for all sufficiently large $n$:
\begin{enumerate}
\item[\textbf{(A1)}] (\emph{Permutation symmetry}) $\cC_n$ has a doubly-transitive permutation-automorphism group.

\item[\textbf{(A2)}] (\emph{Symbol symmetry}) $\cC_n$ has a symbol-automorphism group with a doubly-transitive subgroup that is matched to the channel $W$.

\item[\textbf{(A3)}] (\emph{Non-degeneracy}) The memoryless channel is injective.

\item[\textbf{(A4)}] (\emph{Two-look property})
There exists a subset $J_n\subseteq[N_n]$ with $0\in J_n$ and $|J_n|/N_n\to 0$ such that the punctured code $\cC_n' \triangleq \cC_n|_{J_n}$ has a transitive permutation automorphism group and
\[
\Delta(\cC_n,J_n):=\big|R(\cC_n')-R(\cC_n)\big|\ \xrightarrow[n\to\infty]{}\ 0.
\]
\end{enumerate}
Then, letting $X$ be uniformly distributed on $\cC_n$ and $Y$ be the output of $W^{\otimes N_n}$, the maximal symbol-error rate under symbol-MAP decoding satisfies
\[
\max_{i\in[N_n]} \mathrm{SER}(X_i\mid Y)\ \xrightarrow[n\to\infty]{}\ 0.
\]
In particular, $\{\cC_n\}$ achieves the symmetric capacity $C$ with respect to symbol-error probability.
\end{thm}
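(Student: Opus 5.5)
The plan is to adapt the area-theorem / overlap-matrix argument of \cite{Reeves-isit23} to a general $q$-ary group code. Let $\bX$ be uniform on $\cC_n$ and $\bY=W^{\otimes N_n}(\bX)$. By (A2), the symbol marginals are uniform (Lemma~\ref{lem:uniform_marginals}); combined with (A1), every coordinate has the same MAP error. It is therefore enough to show that $\mathrm{SER}(X_0\mid \bY)\to 0$.

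\textbf{Step 1: erasure interpolation.} Embed $W$ in a one-parameter family $W_t$ in which each output is independently erased with probability $1-t$. Consider the EXIT-type quantity $h_i(t)=H(X_i\mid \bY_{\sim i}(t))$ in base-$q$ units. The area/chain rule gives
\[
\int_0^1 \frac{1}{N_n}\sum_i h_i(t)\,dt \approx 1-R_n,
\]
up to terms bounded by the channel capacity. Permutation symmetry makes all $h_i$ equal. Hence the extrinsic entropy $h_0(t)$ must drop from $1$ to near $0$ within a window whose location is determined by the rate. The key claim is that this window is located below the point where $W_t$ reaches capacity $R_n<C$, so that at $t=1$ the extrinsic entropy is small.

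\textbf{Step 2: sharp threshold via symmetry.} Show that $h_0(t)$ has a sharp transition. Following \cite{Reeves-isit23}, use the two-look property (A4) together with the overlap matrix of two conditionally independent posterior samples $\bX,\bX'$. Define $Q_{ab}=\frac{1}{N_n}\sum_i 1\{X_i=a,X'_i=b\}$. Doubly transitive symbol symmetry matched to $W$ (A2) reduces $Q$ to a single scalar overlap. The nesting in (A4) bounds the derivative of the conditional entropy of the punctured part $J_n$ given the rest, which forces the overlap to concentrate. Because the rate gap $\Delta(\cC_n,J_n)\to0$, looking at $J_n$ twice costs almost nothing, so $I(X_0;\bY\mid \bY_{J_n\setminus 0})$ is nearly equal to $I(X_0;\bY_{J_n})$. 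Injectivity (A3) then converts small mutual-information increments into closeness of posteriors, via a strict data-processing gap for full-rank channels.

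\textbf{Step 3: conclusion.} Combine Steps 1 and 2. Since the overlap concentrates and $R<C$, the overlap at $t=1$ must equal its maximal value. This means the posterior of $X_0$ concentrates on one symbol, so $H(X_0\mid\bY)\to0$, and Fano's inequality gives vanishing SER.

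The main obstacle is Step 2: proving concentration of the $q$-ary overlap and extracting the sharp threshold without a binary Fourier/hypercontractivity tool. For this I would try to reproduce the \cite{Reeves-isit23} argument, using the doubly transitive symbol group to collapse the $q\times q$ overlap to a scalar. Non-degeneracy (A3) is needed there to rule out spurious fixed points.
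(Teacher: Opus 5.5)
\textbf{Overall.} The paper does not prove this theorem; it only says it ``can be distilled from'' \cite{Reeves-isit23}. Your Step~2 defers to the same source, so your strategy matches the paper's. Judged as a proof, however, the sketch breaks where it commits to specifics.

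\textbf{The area identity in Step~1 is misstated.} For the erasure-interpolated channel,
$\frac{d}{dt}H(\bX\mid\bY(t))=-\sum_i I(X_i;Y_i\mid \bY_{\sim i}(t))$.
So the quantity that integrates cleanly is the GEXIT-type term $G_i(t)=I(X_i;Y_i\mid\bY_{\sim i}(t))$, not the extrinsic entropy $h_i(t)$. The identity is
$\int_0^1\frac{1}{N_n}\sum_i G_i(t)\,dt=R_n-H(\bX\mid\bY)/N_n$,
with no $1-R_n$ anywhere. The threshold location then comes from $G_0(t)\le\min\{C,h_0(t)\}$, together with $G_0(t)\approx C$ whenever $h_0(t)\approx 1$. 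A smaller slip: $H(X_0\mid\bY)\to 0$ gives vanishing SER through $\mathrm{SER}\le 1-q^{-H(X_0\mid\bY)}$; Fano's inequality runs the other way.

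\textbf{The genuine gap is Step~3.} The inference ``the overlap concentrates and $R<C$, so the overlap at $t=1$ is maximal'' does not follow.

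Concentration only says that pairwise posterior correlations are small at scale $1/N_n$. Double transitivity essentially gives this for free: $\frac{d}{dt}I(X_0;\bY_{\sim 0}(t))=(N_n-1)\,I(X_0;Y_1\mid\bY_{\sim\{0,1\}}(t))$, and the left side integrates to at most $1$. Concentration is also compatible with non-maximal overlap; for example, it holds at $t=0$ for any pairwise-independent code.

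What the area argument actually needs is a dichotomy: for all but a vanishing set of $t$, either $h_0(t)\approx 0$ or $G_0(t)\approx C$. Since $h_0$ is nonincreasing and $\int_0^1 G_0\le R_n<C$, this would force $h_0(1)\approx 0$.

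Your two-look sentence, $I(X_0;\bY\mid\bY_{J_n\setminus 0})\approx I(X_0;\bY_{J_n})$, is not a correct statement. What (A4) legitimately gives comes from combining three ingredients:
\begin{itemize}
\item the area identity for $\cC_n'$, which is where transitivity of $\cC_n'$ and $0\in J_n$ enter;
\item Shearer's lemma over random automorphic images of $J_n$, using (A1), which gives $H(\bX\mid\bY)/N_n\le H(\bX_{J_n}\mid\bY_{J_n})/|J_n|$;
\item the pointwise monotonicity $I(X_0;Y_0\mid\bY_{\sim 0}(t))\le I(X_0;Y_0\mid\bY_{J_n\setminus 0}(t))$.
\end{itemize}
Together these yield
\[
\int_0^1 I\big(Y_0;\bY_{J_n^c}(t)\,\big|\,\bY_{J_n\setminus 0}(t)\big)\,dt\;\le\;\Delta(\cC_n,J_n)\;\to\;0 .
\]
In words, the second look at $J_n^c$ adds almost nothing about $Y_0$ beyond $\bY_{J_n\setminus 0}$. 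By injectivity (A3), the same then holds for the posterior of $X_0$.

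Converting this into the dichotomy is the heart of the argument. That is where double transitivity (moving $J_n$ around coordinate $0$) and the matched doubly-transitive symbol group must be used, and your sketch does not supply that step.
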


Theorem~\ref{thm:capacity_symmetry_subcode_app} does not apply directly to the blocked
RM code (defined below) obtained by grouping $b$ consecutive bits of a binary RM codeword into a single symbol of $\F_2^b$.
This is because the blocked code does not have a large enough symbol-automorphism group.
Instead, we will apply Theorem~\ref{thm:capacity_symmetry_subcode_app} to interleaved RM codes (defined below) and show that the symbol error rate vanishes for this family as the block length increases.
Since a blocked RM code is a subcode of an interleaved RM code, this implies that the symbol error rate of the blocked RM code also vanishes.
In addition, the rate difference between the two codes vanishes as the blocklength increases.
Thus, the blocked channel also achieves capacity.
The ideas outlined above are presented formally below.

\begin{defn}
\label{def:interleaved_rm_supercode}
Fix $h\in \mathbb{N}$, $b=2^h$, $m\ge h$, and $N = 2^{m-h}$.
The \emph{interleaved RM} (IRM) code $\cI_m \subseteq (\F_2^b)^N$ is the group code over $\cX=\F_2^b$ obtained by: (i) taking any $b$ codewords from $\RM(r_m,m-h)$ and stacking them into a $b\times N$ matrix and (ii) treating each matrix column as a code symbol in $\cX$.
\end{defn}

\begin{lem}[Blocked RM subcode of IRM code]
\label{lem:blocked_rm_subcode}
Fix $h\in \mathbb{N}$, $b=2^h$ and let the \emph{blocked RM} (BRM) code $\cB_m\subseteq(\F_2^b)^N$ be obtained by grouping consecutive blocks of $b$ bits from $\RM(r_m,m)$ into $\F_2^b$ symbols, where
$N=2^{m-h}$.
Then,
\[
\cB_m \subseteq \cI_m.
\]
\end{lem}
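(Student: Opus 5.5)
The plan is to prove the inclusion directly from the evaluation description of RM codes in Definition~\ref{def:rm}, using the same coordinate decomposition as in Section~\ref{sec:implicit_block_int}. The key observation is that a column of the stacked $b\times N$ matrix is exactly one block of $b$ consecutive bits of an $\RM(r_m,m)$ codeword. A row of that matrix is obtained by fixing the within-block position, and fixing variables in a polynomial cannot raise its degree.

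First, fix the indexing. For $\ell\in[2^m]$, write $\theta_m(\ell)=\bv=(\bu,\bw)\in\F_2^h\times\F_2^{m-h}$, where $\bu$ consists of the $h$ least significant bits and $\bw$ of the remaining $m-h$ bits. Then $\ell = j b + k$ with $k=\theta_h^{-1}(\bu)\in[b]$ and $j=\theta_{m-h}^{-1}(\bw)\in[N]$. Hence the $j$-th group of $b$ consecutive bits of $\bc\in\RM(r_m,m)$, taken in order $k=0,\ldots,b-1$, is
\[
\big(f(\theta_h(k),\theta_{m-h}(j))\big)_{k\in[b]},
\]
where $f\in\mathcal{F}(r_m,m)$ is the message polynomial of $\bc$. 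So the BRM codeword is the sequence of $N$ symbols in $\F_2^b$ whose $j$-th symbol is this column.

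Second, read the same data as a $b\times N$ matrix whose $(k,j)$ entry is $f(\theta_h(k),\theta_{m-h}(j))$. For fixed $k$, define $g_k(\bw)\triangleq f(\theta_h(k),\bw)$. Substitute the constants $\bu=\theta_h(k)$ into each monomial $\bv^{\bi}$ of $f$. Each monomial either becomes $0$ or becomes a monomial in $\bw$ alone whose degree is at most $|\bi|\le r_m$. Therefore $g_k\in\mathcal{F}(r_m,m-h)$, and the $k$-th row $(g_k(\theta_{m-h}(j)))_{j\in[N]}$ is a codeword of $\RM(r_m,m-h)$ by Definition~\ref{def:rm}. (If $r_m\ge m-h$, then $\RM(r_m,m-h)$ is the full space and this step is trivial.)

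Third, conclude. The BRM codeword is obtained by stacking the $b$ codewords $g_0,\ldots,g_{b-1}$ of $\RM(r_m,m-h)$ and reading off the columns, which is exactly the construction of Definition~\ref{def:interleaved_rm_supercode}. Hence $\cB_m\subseteq\cI_m$. The only point requiring care is the bookkeeping that the LSB-first convention of $\theta_m$ makes consecutive bits share the same $\bw$, and that blocks appear in the natural order of $\theta_{m-h}^{-1}(\bw)$. I do not expect any real obstacle beyond this indexing check.
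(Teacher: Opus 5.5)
Your proposal is correct and follows essentially the same argument as the paper: split $\bv=(\bu,\bw)$ with $\bu$ the $h$ least-significant bits, view the blocked codeword as the $b\times N$ matrix with entries $f(\bu,\bw)$, and note that fixing $\bu$ leaves a polynomial in $\bw$ of degree at most $r_m$, so each row lies in $\RM(r_m,m-h)$. Your extra bookkeeping (the decomposition $\ell = jb+k$ and the monomial-substitution degree argument) spells out details that the paper states briefly.
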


\begin{proof}
Write each evaluation point $\bv\in\F_2^m$ as
\[
\bv=(\bu,\bw)\in \F_2^h\times \F_2^{m-h},
\]
where $\bu$ corresponds to the $h$ least-significant coordinates under the map
$\theta_m$ from Definition~\ref{def:theta}.
Thus, consecutive groups of $b=2^h$ coordinates correspond exactly to fixing
$\bw$ and letting $\bu$ vary over $\F_2^h$.

Let $f\in\mathcal{F}(r_m,m)$ and let $\bc\in\RM(r_m,m)$ be its evaluation vector.
After grouping into blocks of size $b$, the resulting codeword in $(\F_2^b)^N$
can be viewed as a $b\times N$ matrix $M_f$ with rows indexed by $\bu\in\F_2^h$
and columns indexed by $\bw\in\F_2^{m-h}$, where
\[
M_f(\bu,\bw)=f(\bu,\bw).
\]
For each fixed $\bu$, the row function
\[
\bw \mapsto f(\bu,\bw)
\]
is a polynomial in $m-h$ variables of degree at most $r_m$ and thus the row belongs to $\RM(r_m,m-h)$.
Therefore, every blocked RM codeword belongs to $\cI_m$, proving
$\cB_m\subseteq \cI_m$.
\end{proof}

\begin{lem}
\label{lem:blocked_rm_rate_gap}
The rate gap between $\cI_m$
and $\cB_m$ is non-negative and, if $h=h(m)=o(\sqrt m)$, then
\[
R(\cI_m)-R(\cB_m)\to 0.
\]
\end{lem}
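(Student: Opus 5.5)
Since both codes live in $(\F_2^b)^N$ with $N=2^{m-h}$ and $|\cX|=2^b$, the rates in qits are $R(\cB_m)=\frac{1}{bN}\log_2|\cB_m|=\frac{1}{2^m}\binom{m}{\le r_m}$ and $R(\cI_m)=\frac{1}{bN}\log_2|\cI_m|=\frac{b\binom{m-h}{\le r_m}}{b\,2^{m-h}}=\frac{1}{2^{m-h}}\binom{m-h}{\le r_m}$. That is, $R(\cB_m)=R(\RM(r_m,m))$ and $R(\cI_m)=R(\RM(r_m,m-h))$. So the claim reduces to comparing the rates of two RM codes of the same order $r_m$ whose numbers of variables differ by $h$.

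Non-negativity is immediate from Lemma~\ref{lem:blocked_rm_subcode}. Since $\cB_m\subseteq\cI_m$ and both are subsets of the same space $(\F_2^b)^N$, we get $|\cB_m|\le|\cI_m|$ and hence $R(\cB_m)\le R(\cI_m)$. Alternatively, one can check directly that $\binom{m}{\le r}2^{-m}$ is nonincreasing in $m$ for fixed $r$. This follows from the Pascal-type identity $\binom{m+1}{\le r}=\binom{m}{\le r}+\binom{m}{\le r-1}\le 2\binom{m}{\le r}$.

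For the vanishing gap, apply Lemma~\ref{lem:rm_rate_diff} with base parameter $m-h$ and $k=h$. Taking $\cC_0=\RM(r_m,m-h)$ and $\cC_h=\RM(r_m,m)$ gives
\[
R(\cB_m)\ \ge\ R(\cI_m)-\frac{h}{2\sqrt{m-h}}.
\]
Therefore
\[
0\le R(\cI_m)-R(\cB_m)\le \frac{h}{2\sqrt{m-h}}.
\]
If $h=o(\sqrt m)$, then in particular $h\le m/2$ for large $m$. Hence $\sqrt{m-h}\ge\sqrt{m/2}$, and the bound is at most $h/\sqrt{2m}\to 0$.

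The main point requiring care is the rate bookkeeping: both codes must be measured in qits over $\F_2^b$ (equivalently, bits per binary coordinate). This ensures that the factor $b$ from stacking $b$ rows cancels, so that $\cI_m$ really has the rate of $\RM(r_m,m-h)$. Another point of care is that Lemma~\ref{lem:rm_rate_diff} is applied with base parameter $m-h$ rather than $m$. If the lemma's constant depends on the base only through $\sqrt{\cdot}$, as stated, the adjustment from $\sqrt{m}$ to $\sqrt{m-h}$ is harmless, as shown above.
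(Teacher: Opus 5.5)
Your proof is correct and follows the same route as the paper: identify $R(\cB_m)=R(\RM(r_m,m))$ and $R(\cI_m)=R(\RM(r_m,m-h))$ by the base-$2^b$ rate bookkeeping, then invoke Lemma~\ref{lem:rm_rate_diff} with $k=h$. You are slightly more careful than the paper, which leaves non-negativity implicit and does not note that the lemma is applied with base parameter $m-h$, giving the bound $h/(2\sqrt{m-h})$.
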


\begin{proof}
Grouping $b$ consecutive binary coordinates into one symbol of $\F_2^b$ does not
change the numerical rate when the latter is measured in base-$2^b$ units.
Hence, we have
\[
R(\cB_m)=\frac{\dim(\RM(r_m,m))}{2^m}=R(\RM(r_m,m)).
\]
Likewise, $\cI_m$ consists of $b$ independent rows from $\RM(r_m,m-h)$.
Thus, we have
\[
|\cI_m|=|\RM(r_m,m-h)|^b
      =2^{\,b\dim(\RM(r_m,m-h))}.
\]
Computing its rate over the alphabet $\F_2^b$ gives
\[
R(\cI_m)
=
\frac{b\,\dim(\RM(r_m,m-h))}{b\,2^{m-h}}
=
R(\RM(r_m,m-h)).
\]
The final claim follows from Lemma~\ref{lem:rm_rate_diff} with $k=h$.
\end{proof}

\begin{lem}
\label{lem:subcode_ser_monotone}
Let $\cD, \cE$ be group codes over $\F_2^b$ satisfying $\cD\subseteq\cE\subseteq(\F_2^b)^N$ and assume the channel is matched to the symbol-automorphism group of the code.
Then, the symbol-MAP error probability for $\cD$ is at most that of $\cE$.
In particular, for the codes above,
\[
\mathrm{SER}_{\cB_m} \le \mathrm{SER}_{\cI_m}.
\]
\end{lem}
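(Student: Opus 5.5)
The natural route is a "genie" argument: reveal side information to the decoder for the larger code $\cE$ so that it faces the same problem as a decoder for $\cD$. Suppose $\cD$ is a subgroup of $\cE$. Then $\cE$ splits into cosets $\bc+\cD$ with $\bc$ ranging over coset representatives. Let $\bX$ be uniform on $\cE$ and let $\bY$ be the output of $W^{\otimes N}$. Write $\bX = \bc + \bX'$, where $\bc$ is the (uniform) coset representative and $\bX'$ is uniform on $\cD$ and independent of $\bc$. Symbol-MAP error can only decrease when the decoder gets extra information, so
\[
\mathrm{SER}_{\cE}(X_i\mid \bY) \;\ge\; \mathrm{SER}(X_i \mid \bY, \bc).
\]
It therefore suffices to show that, for each fixed coset $\bc+\cD$, decoding $X_i$ from $(\bY,\bc)$ has the same symbol-MAP error as decoding coordinate $i$ of $\cD$ itself.

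Here the matching hypothesis comes in, through the symbol translations $x\mapsto x+a$ applied coordinatewise. The plan is to show that the translate $\bc+\cD$ transmitted over $W^{\otimes N}$ is statistically equivalent to $\cD$ transmitted over $W^{\otimes N}$, up to a known relabeling of the outputs. If every per-coordinate translation $\tau_{a}(x)=x+a$ lies in the channel's symmetry group, with output permutation $\sigma_a$, then applying $\sigma_{c_j}^{-1}$ to each $Y_j$ turns the channel with input $c_j+X'_j$ into the channel with input $X'_j$. Since $\bc$ is known, this relabeling is invertible and known to the decoder, so symbol-MAP error is preserved. Estimating $X_i=c_i+X'_i$ is equivalent to estimating $X'_i$. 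Hence
\[
\mathrm{SER}(X_i\mid \bY,\bc)=\mathrm{SER}_{\cD}(X'_i\mid \bY').
\]
Averaging over $\bc$ gives $\mathrm{SER}_{\cD}\le \mathrm{SER}_{\cE}$ for every coordinate, and in particular for the maximum over $i$.

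The main obstacle is justifying that translations by arbitrary coset representatives, which vary across coordinates, are channel symmetries. "Matched to the symbol-automorphism group" speaks of homogeneous relabelings, not coordinate-dependent translations. In the setting of interest this is resolved by the scrambling. The channel $\widetilde W^b$ includes a uniformly random affine map $\varphi\in\mathrm{AGL}(b,2)$ that is revealed at the output, and $\mathrm{AGL}(b,2)$ contains every translation $x\mapsto x+\bm{\beta}$. So for each coordinate $j$ separately, $\widetilde W^b(\sigma(y,\varphi)\mid x+a)=\widetilde W^b((y,\varphi)\mid x)$, with $\sigma$ replacing $\varphi$ by $\varphi\circ\tau_a^{-1}$. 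Per-coordinate translation invariance therefore holds, and the argument above goes through.

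Finally, for the specific codes, Lemma~\ref{lem:blocked_rm_subcode} gives $\cB_m\subseteq\cI_m$, and both are $\F_2$-linear, hence group codes. This yields $\mathrm{SER}_{\cB_m}\le\mathrm{SER}_{\cI_m}$.
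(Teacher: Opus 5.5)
Your proposal is correct and matches the paper's proof: reveal the coset of $\cD$ in $\cE$ as genie side information (which cannot increase symbol-MAP error), then use translation symmetry of the channel to identify decoding on each coset $\bc+\cD$ with decoding $\cD$ itself. You also verify explicitly that coordinate-dependent translations are channel symmetries, using the translations in $\mathrm{AGL}(b,2)$ and the revealed scrambler; the paper only asserts this step (``equivalent under a known symbolwise translation'').
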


\begin{proof}[Proof]
Consider a codeword drawn uniformly from $\cE$.
Let $U$ denote the coset of $\cD$ inside $\cE$ containing the transmitted codeword.
Conditioned on $U=u$, the transmitted codeword is uniform on a single coset
$u+\cD$.
Because the channel is matched to the symbol-automorphism group of $\cE$, all such cosets of $\cD$ in $\cE$ are equivalent under a known symbolwise translation, so the conditional symbol-MAP error given
$U=u$ is the same as for $\cD$ itself.
Revealing side information cannot increase MAP error, hence
\[
\mathrm{SER}_{\cE}
\ge
\mathrm{SER}_{\cE\mid U}
=
\mathrm{SER}_{\cD}.
\]
Applying this with $\cD=\cB_m$ and $\cE=\cI_m$ gives the claim.
\end{proof}

Now, we give a proof of Theorem~\ref{thm:capacity_symmetry_subcode}.

\begin{proof}[Proof of Theorem~\ref{thm:capacity_symmetry_subcode}]
Fix $h\in \mathbb{N}$, $b=2^h$, and let
\[
N_m \triangleq 2^{m-h}.
\]
Take the binary Reed--Muller code
\[
\RM(r_m,m)\subseteq \F_2^{2^m},
\]
group its coordinates into consecutive blocks of $b$ bits to get the BRM code
\[
\cB_m \subseteq (\F_2^b)^{N_m}.
\]
This code is contained in the IRM code $\cI_m\subseteq(\F_2^b)^{N_m}$ described in Definition~\ref{def:interleaved_rm_supercode}.

Our goal is to prove that $\cB_m$ achieves capacity.
Thus, we will apply Theorem~\ref{thm:capacity_symmetry_subcode_app} to $\cI_m$, and then
pass to the subcode $\cB_m$ using Lemma~\ref{lem:subcode_ser_monotone}.

\medskip
\noindent\textbf{1) \(\cI_m\) is a group code with rate approaching \(R\).}

By construction, $\cI_m$ is a group code over the alphabet $\F_2^b$.
By Lemma~\ref{lem:blocked_rm_rate_gap},
\[
R(\cI_m)=R(\RM(r_m,m-h)).
\]
Choose the binary RM sequence so that
\[
R(\cB_m)=R(\RM(r_m,m))\to R.
\]
If $h=h(m)=o(\sqrt m)$, then Lemma~\ref{lem:blocked_rm_rate_gap} gives
\[
R(\cI_m)-R(\cB_m)\to 0,
\]
and hence also
\[
R(\cI_m)\to R.
\]

\medskip
\noindent\textbf{2) Conditions \textbf{(A1)} and \textbf{(A2)} hold for \(\cI_m\).}

For (A1), let $\pi$ be any permutation automorphism of $\mathrm{RM}(r,m-h)$ and consider the matrix representation of a codeword of $\cI_m$ as described in Definition~\ref{def:interleaved_rm_supercode}.
Applying $\pi$ simultaneously to every row of the matrix sends
\[
(\bc^{(0)},\ldots,\bc^{(b-1)})
\mapsto
(\pi\bc^{(0)},\ldots,\pi\bc^{(b-1)}),
\]
which is again a valid codeword of $\cI_m$ because each $\pi\bc^{(j)}$ remains in
$\mathrm{RM} (r,m-h)$.
Thus, the permutation automorphism group of $\cI_m$ contains the
permutation automorphism group of $\mathrm{RM} (r,m-h)$ and is doubly transitive.

For (A2), let $A$ be an invertible $b\times b$ matrix over $\F_2$ and $\bm{\beta}\in\F_2^b$.
Applying the affine map $\sigma_{A,\bm{\beta}}:\F_2^b\to\F_2^b$, defined by
\[
\sigma_{A,\bm{\beta}}(u)=Au+\bm{\beta},
\]
to a codeword $\bx\in\cI_m$ symbolwise is equivalent to applying the linear map $A$ to each column in the matrix representation of $\bx$, followed by adding the constant vector $\bm{\beta}$.
Thus, if the rows of $\bx$ are $\bc^{(0)},\ldots,\bc^{(h-1)}$, then the transformed rows are
\[
\widetilde{\bc}^{(i)}
=
\sum_{j=0}^{b-1} A_{ij}\bc^{(j)} + \beta_i \bm{1},
\qquad i\in[h].
\]
Since $\mathrm{RM} (r,m-h)$ is linear and contains the all-ones codeword $\bm{1}$, each $\widetilde{\bc}^{(i)}$ again belongs to
$\RM (r,m-h)$.
Hence, the symbolwise affine map preserves $\cI_m$ and the symbol-automorphism group of the code contains $\mathrm{AGL}(b)$.
Using the idea of group symmetrization from~\cite{Reeves-arxiv23}, we see that the random scrambling operation transforms the channel $W(y|x)$ into a channel whose capacity $C$ equals the symmetric capacity $C_{\mathrm{sym}}$ of the original channel and whose symmetry group contains $\mathrm{AGL}(b)$.
Thus, the code contains a doubly-transitive symbol-automorphism group that is matched to the symmetry group of the channel.

\medskip
\noindent\textbf{3) Condition \textbf{(A3)}.}

Assumption \textbf{(A3)} is exactly the standing assumption that the channel $W(y\mid x)$ is injective.

\medskip
\noindent\textbf{4) Condition \textbf{(A4)} for \(\cI_m\).}

Take any integer sequence $g_m\to\infty$ such that
\[
g_m=o(\sqrt m).
\]
Using the standard RM affine-subspace restriction argument, puncturing
$\RM(r_m,m-h)$ to a codimension-$g_m$ affine subspace produces
\[
\RM(r_m,m-h-g_m).
\]
Applying this rowwise to $\cI_m$ gives a punctured code $\cI_m'$ on an index set
$J_m\subseteq[N_m]$ with $0\in J_m$ and
\[
\frac{|J_m|}{N_m}\to 0.
\]
Moreover, the same arguments used in~\cite{Reeves-arxiv23}
show that $\cI_m'$ has a transitive permutation-automorphism group.

Finally, since $R(\cI_m)=R(\RM(r_m,m-h))$ and $R(\cI_m')=R(\RM(r_m,m-h-g_m))$, we see that Lemma~\ref{lem:rm_rate_diff} gives
\[
|R(\cI_m')-R(\cI_m)|\to 0.
\]
Thus, the concrete puncturing condition used in the proof of the non-binary RM capacity theorem in~\cite{Reeves-arxiv23} implies \textbf{(A4)} here.

\medskip
\noindent\textbf{5) Apply Theorem~\ref{thm:capacity_symmetry_subcode_app} to $\cI_m$.}

We have verified \textbf{(A1)}--\textbf{(A4)} for the sequence $\{\cI_m\}$.
Since $R(\cI_m)\to R<C$, Theorem~\ref{thm:capacity_symmetry_subcode_app} implies
that the maximal symbol-error rate under symbol-MAP decoding on the channel
$W^{\otimes N_m}$ satisfies
\[
\max_{i\in[N_m]} \mathrm{SER}_{\cI_m}(X_i\mid Y)\to 0.
\]

\medskip
\noindent\textbf{6) Pass from $\cI_m$ to $\cB_m$.}

By Lemma~\ref{lem:blocked_rm_subcode}, we have
\[
\cB_m\subseteq \cI_m.
\]
Hence, by Lemma~\ref{lem:subcode_ser_monotone}, it follows that
\[
\max_{i\in[N_m]} \mathrm{SER}_{\cB_m}(X_i\mid Y)
\le
\max_{i\in[N_m]} \mathrm{SER}_{\cI_m}(X_i\mid Y).
\]
The right-hand side tends to zero, so the same is true for the blocked RM code.
This proves Theorem~\ref{thm:capacity_symmetry_subcode}.
\end{proof}

\section{Exponential Mixing of Markov Chain}

\begin{defn}
A finite-state Markov chain with transition matrix $P\in \mathbb{R}_{+}^{d\times d}$ is called \emph{irreducible} if, for any $i,j\in[d]$, there exists $k \in \mathbb{N}$ such that $[P^k]_{i,j} > 0$.
It is \emph{primitive} (i.e., irreducible and aperiodic) if there exists $k \in \mathbb{N}$ such that $[P^k]_{i,j} > 0$ for all $i,j\in[d]$.
\end{defn}

It is well-known that a primitive Markov chain has a unique stationary distribution $\pi \in \mathbb{R}_{++}^{d}$ such that $\pi P = \pi$.
Consider the vector space $V=\mathbb{R}^{d}$ of row vectors equipped with the norm $\left\Vert \bm v\right\Vert _{1}=\sum_{i=1}^{d}\left|v_{i}\right|$ and let
\[
U=\left\{ \bm x\in V\,\middle|\,x_{i}\geq0,\ \sum_{i=1}^{d}x_{i}=1\right\}
\]
be the probability simplex.

\begin{lem}[Doeblin contraction under uniform minorization]
\label{lem:positive_contraction}
Let $P$ be a $d\times d$ stochastic matrix satisfying $P_{i,j}\geq \alpha$ for all $i,j\in[d]$ for some $\alpha>0$.
Then, for all $\bm x,\bm y\in U$,
\[
\left\Vert \bm xP-\bm yP\right\Vert _{1}\leq(1-d\alpha)\left\Vert \bm x-\bm y\right\Vert_{1}.
\]
\end{lem}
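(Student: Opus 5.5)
The plan is the standard Doeblin/minorization argument: split off a uniform component of $P$, observe that it contributes the same amount for $\bm x$ and $\bm y$, and bound what remains. Set $\bm\delta=\bm x-\bm y$. Since $\bm x,\bm y\in U$, the vector $\bm\delta$ has coordinates summing to zero. Write $P=d\alpha\,\tfrac{1}{d}\bm 1\bm 1^{\top}+(1-d\alpha)Q$, where
\[
Q\triangleq\frac{P-\alpha\bm 1\bm 1^{\top}}{1-d\alpha}.
\]
This decomposition requires $d\alpha\le 1$. That always holds, because each row of $P$ sums to one and has $d$ entries each at least $\alpha$.

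\textbf{The case $d\alpha<1$.} Here $Q$ is well defined. Its entries are nonnegative because $P_{i,j}\ge\alpha$, and each of its rows sums to $(1-d\alpha)/(1-d\alpha)=1$. So $Q$ is itself a stochastic matrix. For the uniform part, note that $\bm\delta\bm 1\bm 1^{\top}=(\sum_i\delta_i)\bm 1^{\top}=\bm 0$ because the coordinates of $\bm\delta$ sum to zero. Hence
\[
\bm\delta P=(1-d\alpha)\,\bm\delta Q .
\]

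\textbf{Contraction of a stochastic matrix.} It remains to show $\|\bm\delta Q\|_1\le\|\bm\delta\|_1$ for any stochastic $Q$. This follows from the triangle inequality and the unit row sums:
\[
\sum_j\Big|\sum_i\delta_iQ_{i,j}\Big|\le\sum_i|\delta_i|\sum_jQ_{i,j}=\|\bm\delta\|_1 .
\]
Combining the two displays gives $\|\bm xP-\bm yP\|_1\le(1-d\alpha)\|\bm x-\bm y\|_1$.

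\textbf{The boundary case $d\alpha=1$.} Then every entry of $P$ equals $1/d$, so $P=\tfrac1d\bm 1\bm 1^{\top}$. By the same zero-sum observation, $\bm xP=\bm yP$, so both sides of the claimed inequality are $0$ and it holds trivially.

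There is no real obstacle in this argument. The only points needing care are that $d\alpha\le1$, so $Q$ is well defined (with the degenerate case $d\alpha=1$ handled separately), and that $\bm\delta$ has zero sum, which is exactly what removes the uniform component.
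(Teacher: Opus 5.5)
Your proof is correct and is essentially the paper's argument: the paper also subtracts the uniform component $\alpha$ from every entry of $P$, removes it using $\sum_i (x_i-y_i)=0$, and then applies the triangle inequality together with the row sums $\sum_j (P_{i,j}-\alpha)=1-d\alpha$. The only difference is that the paper works directly with the nonnegative matrix $P-\alpha\bm 1\bm 1^{\top}$ rather than normalizing it into a stochastic $Q$, and this avoids your separate treatment of the boundary case $d\alpha=1$.
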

\begin{proof}
For $\bm x,\bm y\in U$, observe that
\begin{align*}
\left\Vert \bm xP-\bm yP\right\Vert _{1}
&=\sum_{j=1}^{d}\left|\sum_{i=1}^{d}(x_{i}-y_{i})(P_{i,j}-\alpha+\alpha)\right|\\
&=\sum_{j=1}^{d}\left|\sum_{i=1}^{d}(x_{i}-y_{i})(P_{i,j}-\alpha)+\sum_{i=1}^{d}(x_{i}-y_{i})\alpha\right|\\
&=\sum_{j=1}^{d}\left|\sum_{i=1}^{d}(x_{i}-y_{i})(P_{i,j}-\alpha)\right|\\
&\leq\sum_{j=1}^{d}\sum_{i=1}^{d}\left|x_{i}-y_{i}\right|(P_{i,j}-\alpha)\\
&=\sum_{i=1}^{d}\left|x_{i}-y_{i}\right|\sum_{j=1}^{d}(P_{i,j}-\alpha)\\
&\leq\sum_{i=1}^{d}\left|x_{i}-y_{i}\right|(1-d\alpha)\\
&=(1-d\alpha)\left\Vert \bm x-\bm y\right\Vert _{1},
\end{align*}
where we used $\sum_{i=1}^{d}(x_i-y_i)=0$ and $P_{i,j}-\alpha\ge 0$.
\end{proof}

\begin{thm}[Exponential mixing from primitivity]
\label{thm:exp_mixing}
Let $Q$ be the transition matrix of a primitive Markov chain on $d$ states. Then there exist $k\in \mathbb{N}$ and $\alpha >0$ such that $[Q^k]_{i,j} \geq \alpha$ for all $i,j\in[d]$. For any such $k,\alpha$, we have
\begin{align*}
\|\bm{x} Q^t - \pi \|_1 \;\; & \leq 2 (1-d \alpha)^{\lfloor t/ k \rfloor} \\
\|\bm{x} Q^t - \pi \|_{\mathrm{TV}} &\leq (1-d \alpha)^{\lfloor t/ k \rfloor}
\end{align*}
for all probability row vectors $\bm{x}\in U$ and all integers $t\ge 0$, where $\pi$ is the stationary distribution.
\end{thm}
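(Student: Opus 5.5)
The plan is to first produce a uniform entrywise lower bound on some power of $Q$, and then show that this power is a contraction on the simplex (Lemma~\ref{lem:positive_contraction}) and iterate it.

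\emph{Existence of $k,\alpha$.} By the definition of primitivity there is a $k\in\mathbb{N}$ with $[Q^k]_{i,j}>0$ for all $i,j\in[d]$. Since there are only finitely many entries, setting $\alpha\triangleq\min_{i,j}[Q^k]_{i,j}$ gives $\alpha>0$. Each row of $Q^k$ sums to $1$ and every entry is at least $\alpha$, so $d\alpha\le 1$ and hence $1-d\alpha\in[0,1)$.

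\emph{Stationary distribution and contraction.} Now fix any such $k,\alpha$. Because $Q^k$ is stochastic with entries at least $\alpha$, Lemma~\ref{lem:positive_contraction} gives $\|\bm xQ^k-\bm yQ^k\|_1\le(1-d\alpha)\|\bm x-\bm y\|_1$ for all $\bm x,\bm y\in U$. The statement takes $\pi$ to be the stationary distribution (existence and uniqueness are noted just before the lemma). If one wants to avoid quoting that, run the following argument:
\begin{itemize}
\item $Q$ maps $U$ into $U$, and the Cesàro averages of $\bm xQ^t$ have a limit point that is invariant under $Q$, so a stationary distribution exists.
\item If $\pi,\pi'\in U$ are both stationary, the contraction gives $\|\pi-\pi'\|_1\le(1-d\alpha)\|\pi-\pi'\|_1$, which forces $\pi=\pi'$.
\end{itemize}

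\emph{Iteration.} For $t\ge 0$, write $t=jk+s$ with $j=\lfloor t/k\rfloor$ and $0\le s<k$. Using $\pi=\pi Q^t$, set $\bm x'=\bm xQ^s\in U$. Then
\[
\|\bm xQ^t-\pi\|_1=\|\bm x'(Q^k)^j-\pi(Q^k)^j\|_1\le(1-d\alpha)^j\|\bm x'-\pi\|_1,
\]
by applying the contraction $j$ times. Since both $\bm x'$ and $\pi$ lie in $U$, we have $\|\bm x'-\pi\|_1\le 2$. This gives the $\ell_1$ bound $2(1-d\alpha)^{\lfloor t/k\rfloor}$. The total-variation bound follows by halving, since $\|\cdot\|_{\mathrm{TV}}=\tfrac12\|\cdot\|_1$.

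There is no real obstacle here. The only points needing care are that the leftover $Q^s$ factor must be absorbed on the correct side, so it is applied to $\bm x$ before the $j$ contractions, and that $\pi Q^s=\pi$ is used for the stationary term.
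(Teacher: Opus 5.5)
Your proof is correct and follows the paper's argument essentially verbatim: apply Lemma~\ref{lem:positive_contraction} to $Q^k$ with $\bm y=\pi$, write $t=\lfloor t/k\rfloor k + r$, absorb $Q^r$ into $\bm x$, use $\|\bm x'-\pi\|_1\le 2$, and halve to get the TV bound. Two of your additions are small but welcome extra care: the self-contained existence and uniqueness argument for $\pi$, and the explicit observation that the bound holds for any admissible $\alpha$, not only the minimum entry of $Q^k$.
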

\begin{proof}
Since $Q$ is primitive, there exists $k$ such that $Q^k$ has strictly positive entries.
Let $\alpha= \min_{i,j}[Q^k]_{i,j}>0$ and apply Lemma~\ref{lem:positive_contraction} with $P=Q^k$ and $\bm y=\pi$ to get
\[
\| \bm{x} Q^k - \pi Q^k \|_1 = \| \bm{x} Q^k - \pi \|_1 \leq (1-d\alpha)\|\bm{x} - \pi \|_1.
\]
Since $\|\bm{x} - \pi \|_1 \leq 2$ for any $\bm x\in U$, iterating yields
\[
\|\bm{x} Q^{\ell k} - \pi \|_1 \le 2(1-d\alpha)^{\ell},\qquad \ell\in\mathbb{N}.
\]
For a general $t\ge 0$, write $t=\ell k + r$ with $\ell=\lfloor t/k\rfloor$ and $r=t\bmod k$. Then
\[
\|\bm{x} Q^{t} - \pi \|_1
=
\|(\bm{x}Q^{r}) Q^{\ell k} - \pi \|_1
\le 2(1-d\alpha)^{\ell}.
\]
This proves the stated inequality and the TV bound follows immediately.
\end{proof}

\begin{lem}[TV bound for decimated Markov chain]
\label{lem:spectral_bound}
Consider a primitive Markov chain on $d$ states with transition probability matrix $Q$ and state sequence $S_0,S_1,\ldots,S_{(n-1)T}$.
Fix $k\in \mathbb{N}$ and $\alpha>0$ such that $[Q^k]_{i,j}\ge \alpha$ for all $i,j\in[d]$.
For $T\ge k$ and $n\ge 1$, consider the decimated state $\bS^{\downarrow T} = (S_0,S_T,\ldots,S_{(n-1)T})$ which evolves according to the Markov transition matrix $Q^T$.

Let $P_{n,i}$ be the law for length-$n$ state sequences that start with $i$ copies of the stationary distribution and then evolve according to the decimated Markov chain.
Then, the decimated state sequence has distribution $P_{n,1}$ and the product of stationary marginals is given by $P_{n,n}$.
In this case, the total variation distance between these two distribution satisfies
\[
\|P_{n,1} - P_{n,n}\|_{\mathrm{TV}}
\ \le\
(n-1)\,(1-d\alpha)^{\lfloor T/k\rfloor}.
\]
\end{lem}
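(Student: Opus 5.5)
The plan is a coupling argument, essentially the same one used for Lemma~\ref{lem:chan_tv_bound} but without any outputs. Both $P_{n,1}$ and $P_{n,n}$ are laws on $[d]^n$.

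Under $P_{n,1}$, $S_0\sim\pi$ and each $S_{j+1}$ is drawn from row $S_j$ of $Q^T$. Under $P_{n,n}$, the coordinates $\widetilde S_0,\ldots,\widetilde S_{n-1}$ are i.i.d.\ $\pi$. I will build both sequences on one probability space, step by step:
\begin{itemize}
\item Set $\widetilde S_0=S_0\sim\pi$.
\item At each step $j\to j+1$, if $S_j=\widetilde S_j=s$ so far, draw $(S_{j+1},\widetilde S_{j+1})$ from a maximal coupling of $e_sQ^T$ and $\pi$.
\item Once a mismatch occurs, let the two sequences evolve independently according to their own laws.
\end{itemize}
Each marginal is then correct. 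For the first sequence, conditional on the past its next coordinate is distributed as $e_{S_j}Q^T$. For the second, conditional on the past its next coordinate is distributed as $\pi$, independent of the past, whatever the coupling does. So the first sequence has law $P_{n,1}$ and the second has law $P_{n,n}$.

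Next I bound the one-step mismatch probability. By Theorem~\ref{thm:exp_mixing} applied with $\bm x=e_s$ and $t=T$,
\[
\Pr\big(S_{j+1}\neq\widetilde S_{j+1}\mid S_j=\widetilde S_j=s\big)=\|e_sQ^T-\pi\|_{\mathrm{TV}}\le(1-d\alpha)^{\lfloor T/k\rfloor}.
\]
This bound is uniform in $s$. The hypothesis $T\ge k$ only guarantees that the exponent $\lfloor T/k\rfloor$ is at least one; the bound itself holds for every $T$.

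The two sequences differ only if a first mismatch occurs at one of the $n-1$ transitions. A union bound over these transitions gives
\[
\Pr(\bS^{\downarrow T}\neq\widetilde{\bS})\le(n-1)(1-d\alpha)^{\lfloor T/k\rfloor}.
\]
The coupling inequality $\|P-Q\|_{\mathrm{TV}}\le\Pr(X\neq\widetilde X)$ then yields the claim.

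As an independent check, I would also verify the bound by a telescoping hybrid argument. For $i=1,\ldots,n-1$,
\[
\|P_{n,i}-P_{n,i+1}\|_{\mathrm{TV}}\le\max_s\|e_sQ^T-\pi\|_{\mathrm{TV}}.
\]
This holds because $P_{n,i}$ and $P_{n,i+1}$ share the same first $i$ coordinates, which are i.i.d.\ $\pi$. They differ only in the kernel generating coordinate $i$: a draw from row $S_{i-1}$ of $Q^T$ versus a draw from $\pi$. After that, both use the same Markov continuation from coordinate $i$, and applying a common Markov kernel cannot increase total variation. Summing the $n-1$ terms with the triangle inequality gives the same bound.

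The main delicate point is checking that the "evolve independently after mismatch" rule leaves both marginals correct. An equivalent way to see it is to use the data-processing inequality in the hybrid argument.
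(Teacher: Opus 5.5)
Your proof is correct, but your primary argument takes a different route from the paper's. You use a coupling construction, the device the paper uses for Lemma~\ref{lem:chan_tv_bound}. The paper instead proves this lemma analytically: it writes $\|P_{n,1}-P_{n,n}\|_{\mathrm{TV}}\le\sum_{i=1}^{n-1}\|P_{n,i}-P_{n,i+1}\|_{\mathrm{TV}}$ and factors out the common i.i.d.\ prefix $\widetilde K_i$. It then bounds each term by the $\pi$-average $\sum_{s}\pi(s)\|\pi-e_sQ^T\|_{\mathrm{TV}}$, using that a common Markov continuation cannot increase total variation, and finishes with Theorem~\ref{thm:exp_mixing}. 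That is exactly your ``independent check''; you use $\max_s$ instead of the $\pi$-average, which is equally sufficient.

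The trade-off is as follows. The coupling route makes you verify that the ``run independently after the first mismatch'' rule keeps both marginals correct. You do this correctly: conditional on the joint past, the next coordinates are distributed as $e_{S_j}Q^T$ and $\pi$ respectively. It also requires the one-step mismatch probability to be conditioned on \emph{no earlier mismatch} together with $S_j=s$. Your displayed equality conditions only on $S_j=\widetilde S_j=s$, which is slightly imprecise under your rule, but your union bound over first-mismatch times uses only the correct conditional. In exchange, the coupling gives an explicit joint realization and makes the factor $(n-1)$ transparent. It also extends verbatim to outputs generated from the states, which is why the paper uses it for Lemma~\ref{lem:chan_tv_bound}. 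The hybrid argument is shorter, needs no construction or marginal verification, and yields the $\pi$-averaged per-step bound directly. Both give the same final constant.
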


\begin{proof}
The decimated state sequence $\bS^{\downarrow T}$ %
is Markov with one-step kernel $Q^T$ and its distribution can be written as
\begin{align*}
K_m (s_1,\ldots,s_m | s_0) &\coloneqq \Pr \left( S^{\downarrow T}_1 \!= s_1, \ldots, S^{\downarrow T}_m \!= s_m \mid S^{\downarrow T}_0 \! = s_0 \right) \\
&= \prod_{i=0}^{m-1} [Q^T]_{s_i, s_{i+1}}.
\end{align*}
Now, we will compare this to the i.i.d.\ reference distribution
\[ \widetilde{K}_m (s_0,\ldots,s_{m-1}) \coloneqq \prod_{i=0}^{m-1} \pi(s_i). \] 
To do this, we define $P_{n,i}$ formally (for $1\le i\le n$) with
\begin{align*}
P_{n,i}&(\,s_0,\ldots,s_{n-1}) \\
&= \widetilde{K}_{i} (s_0,\ldots,s_{i-1}) K_{n-i} (s_i,\ldots, s_{n-1}|s_{i-1}).
\end{align*}
Thus, $P_{n,n}$ is the state distribution given by the product of stationary marginals and $P_{n,1}$ is the distribution of the decimated state sequence.

Using the triangle inequality, we get the following telescoping bound
\[
\|P_{n,1} - P_{n,n}\|_{\mathrm{TV}}
 \le \sum_{i=1}^{n-1} \|P_{n,i} - P_{n,i+1}\|_{\mathrm{TV}}.
\]
To simplify this, we calculate
\begin{align*}
&(P_{n,i}-P_{n,i+1}) (s_0,\ldots,s_{n-1}) = \widetilde{K}_{i} (s_0,\ldots,s_{i-1}) \\ & \;\; \cdot \left(\pi(s_i) K_{n-i-1} (s_{i+1},\ldots, s_{n-1}|s_{i}) - K_{n-i} (s_i,\ldots, s_{n-1}|s_{i-1})\right).
\end{align*}
After factoring out the common term $\widetilde{K}_i (s_0,\ldots,s_{i-1})$, we can upper bound the TV distance in the telescoping bound by averaging the conditional TV distance given $S_{i-1}=s_{i-1}$.
Since a Markov kernel cannot increase the TV distance, we find that
\begin{align*}
\|P_{n,i}-P_{n,i+1}\|_{\mathrm{TV}}
&\le
\sum_{s_{i-1} \in[d]} \pi(s_{i-1} )\,\|\pi  - e_{s_{i-1}} Q^T\|_{\mathrm{TV}} \\
&\le
(1-d\alpha)^{\lfloor T/k\rfloor},
\end{align*}
where the last step follows from Theorem~\ref{thm:exp_mixing} with $\mu=e_s$.
Using this in the telescoping sum, we get the stated result.
\end{proof}

\end{document}